\newtheorem{thm}{Theorem}
\newtheorem{cor}{Corollary}
\newtheorem{lem}{Lemma}
\newtheorem{rmk}{Remark}
\newtheorem{prp}{Property}
\theoremstyle{nonumberplain}
\newtheorem{proof}{Proof}
\newcommand{\myfrac}[2]{\frac{\textstyle #1}{\textstyle #2}}
\newcommand{\figurewidth}{0.6\textwidth}
\begin{document}
\title{Block-Sparsity-Induced Adaptive Filter for Multi-Clustering System Identification}

\author{Shuyang~Jiang~and~Yuantao~Gu 
\thanks{The authors are with the Department
 of Electronic Engineering, Tsinghua University, Beijing 100084,
 China (E-mail:\,gyt@tsinghua.edu.cn).}
}

\date{Submitted Oct. 18, 2014}

\maketitle

\begin{abstract}
In order to improve the performance of least mean square (LMS)-based adaptive filtering for identifying block-sparse systems, a new adaptive algorithm called block-sparse LMS (BS-LMS) is proposed in this paper. The basis of the proposed algorithm is to insert a penalty of block-sparsity, which is a mixed $l_{2, 0}$ norm of adaptive tap-weights with equal group partition sizes, into the cost function of traditional LMS algorithm. To describe a block-sparse system response, we first propose a Markov-Gaussian model, which can generate a kind of system responses of arbitrary average sparsity and arbitrary average block length using given parameters. Then we present theoretical expressions of the steady-state misadjustment and transient convergence behavior of BS-LMS with an appropriate group partition size for white Gaussian input data. Based on the above results, we theoretically demonstrate that BS-LMS has much better convergence behavior than $l_0$-LMS with the same small level of misadjustment. Finally, numerical experiments verify that all of the theoretical analysis agrees well with simulation results in a large range of parameters.

\textbf{Keywords:} adaptive filtering, block-sparse system identification, convergence behavior, performance analysis, Markov-Gaussian model.
\end{abstract}

\section{Introduction}
\label{sec:introduction}
Adaptive filtering has been an important research area that attracts much interest in both theoretical and applied issues for a long time \cite{Haykin}. In many scenarios, the unknown systems to be identified are sparse, which means that most of the entries are zero and only a small number of nonzero coefficients exist in the long impulse response (Fig. \ref{fig:sparseresponse}(a)). The typical sparse systems are digital TV transmission channels \cite{Schreiber} and echo paths \cite{Duttweiler}. Among all kinds of sparse systems, there is a family called clustering-sparse systems or block-sparse systems \cite{ITU}. Distinguished from general sparse systems in which the nonzero coefficients may be arbitrarily located, the impulse response of a block-sparse system consists of one or more clusters, wherein a cluster is a gathering of nonzero coefficients (Fig. \ref{fig:sparseresponse}(b,c)). The acoustic echo path is a typical example of single-clustering sparse systems. In satellite-linked communications, the impulse response of the echo path consists of several long flat delay regions and disperse active regions, which is a representative of multi-clustering sparse systems.

\begin{figure}[t]
\begin{center}
\includegraphics[width=0.8\textwidth]{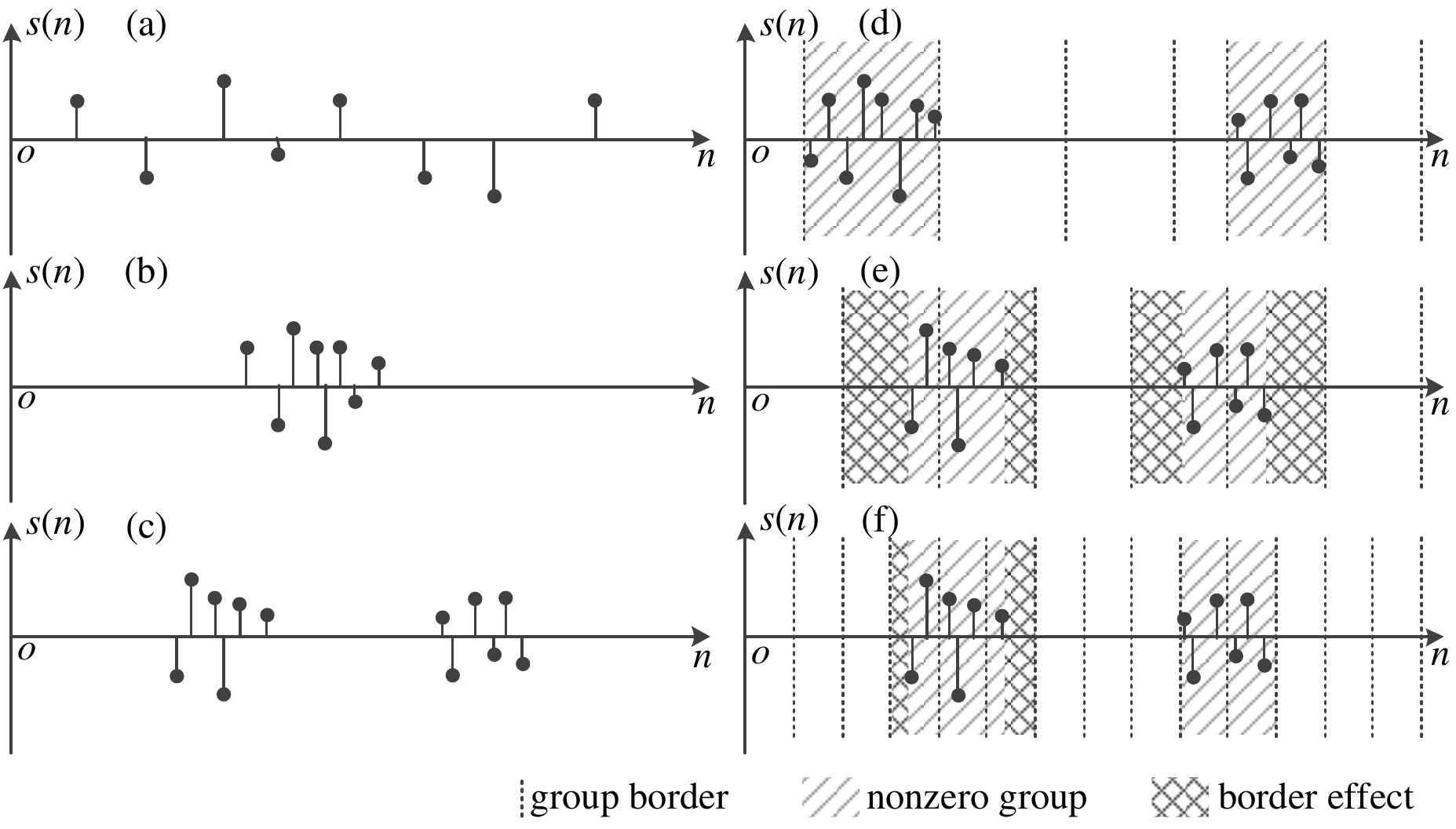}
\caption{(a) A general sparse system. (b) A block-sparse system with one nonzero block. (c) A block-sparse system with two nonzero blocks. (d) The active regions are located randomly in known partition groups. (e) (f) The location of each cluster is arbitrary and unknown, and all of the group partition sizes are the same for practical implementation.}
\label{fig:sparseresponse}
\end{center}
\end{figure}

The least mean square (LMS) algorithm \cite{Widrow} is widely used in various applications due to its low computational cost, easy implementation, and high robustness. However, the traditional LMS has no particular improvement on block-sparse system identification. Many algorithms have been proposed to take advantage of the prior knowledge of block-sparsity. In some algorithms, an auxiliary filter is needed to estimate the positions of the disperse regions. Based on the location information, a number of short adaptive filters are centered at these clusters. The auxiliary filter may be realized as an adaptive delay filter (ADF) \cite{Margo, Margo1} or a full-tap adaptive filter which is operated at a reduced sampling rate \cite{Berg}. In other algorithms, the dispersive regions are detected through the process of convergence. Stochastic Taps NLMS (STNLMS) \cite{Gu} and its two variants \cite{Li, Liu} locate the active region in a stochastic manner. Select and queue with a constraint (SELQUE) algorithm \cite{Sugiyama2} categorizes all taps into two groups: active taps and inactive taps, the latter of which are kept in two queues. The active tap with the minimum absolute coefficient value is replaced by a tap in the queue that is exclusively used for inactive tap indexes residing in the constrained region. An improved M-SELQUE algorithm \cite{Sugiyama3} is applicable to identify an unknown number of multiple dispersive regions. Furthermore, region-based wavelet-packet adaptive algorithm (RBWP) \cite{Nosko} detects the active taps in transform domain and has been shown to be specially effective. In the above mentioned algorithms, the active taps are first located then estimated. The explicitly separated two steps may decelerate the convergence rate and reduce the robustness.

Inspired by a sparsity constraint adaptive algorithm named $l_0$-LMS \cite{Gu2}, we propose block-sparse LMS (BS-LMS) in this work to improve the performance of block-sparse system identification. In $l_0$-LMS, the gradient descent of filter tap-weights are adjusted by approximated $l_0$ norm constraint to learn a general sparse system response. However, it does not utilize the prior knowledge of block-sparsity and has no particular gain when it is applied to identify a clustering sparse system. Motivated by this, we improve $l_0$ norm in the cost function to mixed $l_{2,0}$ norm with equal group partition sizes and exert the sparsity constraint in partitions. We then propose a Markov-Gaussian (M-G) model to generate and describe block-sparse systems. Based on this model, theoretical analysis on the proposed algorithm is conducted. It is proved that BS-LMS outperforms $l_0$-LMS, when the partition size is appropriately chosen. Numerical experiments demonstrate that in block-sparse system identification the proposed algorithm has a faster convergence rate than the reference algorithms with the same steady-state deviation.

This paper is organized as follows. Related works are briefly reviewed in Section \ref{sec:relatedworks}. BS-LMS is proposed in Section \ref{sec:proposedalgorithm}. The theoretical results on steady-state performance and convergence behavior of BS-LMS are presented in Section \ref{sec:theoreticalresult}. The Markov-Gaussian model for generating block-sparse system is proposed and studied in Section \ref{sec:systemmodel}. The key part of this work goes in Section \ref{sec:performanceanalysis}, where the optimal group partition size is studied and  superior performance of BS-LMS compared to $l_0$-LMS is theoretically explained based on the proposed M-G model. Numerical experiments are implemented to verify the above theoretical results in Section \ref{sec:numericalsimulations}. The conclusion is drawn in Section \ref{sec:conclusion}.

\section{Related Work}
\label{sec:relatedworks}

In this section, we briefly review the available (block)-sparsity-constraint-based adaptive algorithms, which are highly relevant to the proposed BS-LMS, from various approaches.

\subsection{Sparsity-Constraint LMS}

The identification of an unknown system with sparse impulse response could be accelerated and enhanced by introducing a sparsity constraint into the cost function of LMS, where the sparsity constraint could be approximated $l_0$ norm \cite{Gu2}, $l_1$ norm \cite{Chen}, reweighted $l_1$ norm \cite{Chen, Taheri}, smoothed $l_0$ norm \cite{Mohimani, Mohimani2}, $l_p$ norm \cite{Wu, Wu2}, or a convex sparsity penalty \cite{Chen2}. However, literature on adaptive filtering algorithms benefiting from block-sparsity is scarce. Thus, it is important to further improve the performance by utilizing block structure. Among the above algorithms, $l_0$-LMS \cite{Gu2} demonstrates rather good performance in experiments and has comprehensive theoretical guarantee \cite{Su2}. Therefore, in this work we generalize $l_0$-LMS to BS-LMS by utilizing block-sparsity. Part of our derivations (mainly in Section \ref{sec:theoreticalresult}) are based on the approach in \cite{Su2}. However, the main contribution of this paper, including BS-LMS algorithm Section (\ref{sec:proposedalgorithm}), the Markov-Gaussian block-sparse model (Section \ref{sec:systemmodel}), and superior performance analysis (Section \ref{sec:performanceanalysis}), are brand-new compared to the above references.

\subsection{Block-Sparse Signal Recovery}

The idea of using mixed norm, such as $l_{2,1}$ norm \cite{Eldar,Stojnic,Stojnic2}, approximated $l_{2, 0}$ norm \cite{Liu2}, $l_{q,1}$ norm \cite{Elhamifar}, to handle block-sparsity has been adopted in sparse signal recovery. By exploiting block structure, recovery may be possible under more general conditions, which demonstrates superior performance brought about by mixed norm. Furthermore, after mixed norm is introduced, the reconstruction error in the presence of noise becomes smaller compared with the conventional algorithms. Besides mixed norm, there are some other approaches in block-sparse signal recovery, including greedy algorithms \cite{Baraniuk,Eldar2,Cevher,Ben-Haim}, Bayesian CS framework-based algorithms \cite{Yu,Zhang}, the dynamic programming-based algorithm \cite{Cevher2} and the decoding-based algorithm \cite{Parvaresh}.

\subsection{Group Sparsity Cognizant RLS}

Recursive least squares (RLS) is another important branch in adaptive filtering. Its faster convergence rate compared to LMS makes RLS an intriguing adaptive paradigm. In \cite{Eksioglu}, group sparsity cognizant RLS is proposed by using various mixed norms, including $l_{2, 1}$ norm, $l_{1, 1}$ norm, $l_{2, 0}$ norm, and $l_{1, 0}$ norm. Numerical experiments show that the novel group sparse RLS is effective and robust for the block-sparse system identification problem, and provides improved performance when compared to the references that only exploit sparsity.

\subsection{Group Partition Selection}

In some of above references \cite{Eldar,Stojnic,Stojnic2,Liu2,Elhamifar,Baraniuk,Eldar2,Ben-Haim,Parvaresh,Eksioglu}, it is assumed that the dispersive active regions are located randomly in known partition groups (Fig. \ref{fig:sparseresponse}(d)). However, one may readily accept that this assumption is impracticable in real scenarios. In fact, the location of each cluster is arbitrary and totally unknown. In this paper, we utilize mixed $l_{2, 0}$ norm in which all of the group partition sizes are the same for practice (Fig. \ref{fig:sparseresponse}(e, f)). Furthermore, in order to avoid the confusion of blocks in unknown system response and the partition blocks in adaptive tap-weights, we adopt \emph{block} or \emph{cluster} to indicate the system coefficient blocks and \emph{group} to denote the partitions in adaptive tap-weights. Based on the theoretical analysis, we will further study the optimal group partition size and demonstrate that the proposed algorithm with an appropriate group partition size achieves superior performance than $l_0$-LMS.

\section{Block-Sparse LMS}
\label{sec:proposedalgorithm}

The proposed algorithm which exploits the block-sparsity of unknown system coefficients is first introduced and then
compared with the available works.

\subsection{Algorithm Description}

The unknown coefficients to be identified and the input signal at time instant $n$ are denoted by ${\bf s} = \left[s_1, s_2, \cdots, s_{L}\right]^{\rm T}$ and ${\bf x}_n = \left[x_n, x_{n-1}, \cdots, x_{n-L+1}\right]^{\rm T}$, respectively, where $L$ is the length of the unknown system and $(\cdot)^{\rm T}$ represents the transposition. The observed output signal is
\begin{equation}
d_n = {\bf x}_n^{\rm T}{\bf s} + v_n,
\label{eq:outputsignal}
\end{equation}
where $v_n$ denotes the measurement noise. The estimated error between the output of the unknown system and that of the adaptive
filter is
\begin{equation}
e_n = d_n - {\bf x}_n^{\rm T}{\bf w}_n,
\label{eq:outputerror}
\end{equation}
where ${\bf w}_n = \left[w_{1,n}, w_{2,n},\cdots, w_{L,n}\right]^{\rm T}$ denotes the adaptive tap-weights.

Motivated by the practical scenarios where the unknown coefficients appear in blocks rather than being arbitrarily spread, we adopt mixed $l_{2, 0}$ norm to evaluate block-sparsity of a vector ${\bf u}=\left[u_1, u_2, \cdots, u_L\right]^{\rm T}$ as
\begin{equation}\label{eq:l20norm}
	\left\| {\bf u} \right\|_{2,0} \triangleq \!\left\| \left[\!\!
	\begin{array}{c}
	\|{\bf u}_{[1]}\|_2\\ \|{\bf u}_{[2]}\|_2\\ \vdots \\ \|{\bf u}_{[N]}\|_2
	\end{array}
	\!\!\right] \right\|_0,
\end{equation}
where ${\bf u}_{[i]} = \left[ u_{(i-1)P+1}, u_{(i-1)P+2}, \cdots, u_{iP}\right]^{\rm T}$ denotes the $i$th group of $\bf u$, $N$ and $P$ denote the number of groups and the group partition size, respectively. We further assume that $L$ can always be divided evenly by $P$ as several zero taps can be added to the tail of ${\bf u}$.

In order to learn the unknown system by utilizing the prior block-sparsity, we design a new cost function, which combines the expectation of the estimated error and mixed $l_{2, 0}$ norm of tap-weight vector,
\begin{equation}
\label{eq:strictcostfunction}
\xi_n \triangleq \textrm{E}\left\{|e_n|^2\right\} + \lambda \left\| {\bf w}_n\right\|_{2,0},
\end{equation}
where $\lambda$ is a positive factor to balance the mean square error and the penalty of block-sparsity. Considering that $l_0$ norm optimization is computationally intractable, we approximate $l_0$ norm in (\ref{eq:strictcostfunction}) by a continuous function \cite{Bradley} and yield
\begin{equation}
\label{eq:costfunction}
\xi_n \approx \textrm{E}\left\{|e_n|^2\right\} + \lambda \sum_{i=1}^{N} \left( 1 - \textrm{exp}\left(-\alpha \left\|{\bf w}_{[i],n}\right\|_2\right)\right),
\end{equation}
where $\alpha$ is a positive constant. One may notice that (\ref{eq:costfunction}) strictly holds when $\alpha$ approaches infinity. By stochastic gradient descent approach and using the approximation
\begin{equation}
{\rm exp}\left(-\alpha |t|\right) \approx
\begin{cases}
1 - \alpha |t|, & |t| \leqslant 1/\alpha; \\
0, & \textrm{elsewhere},
\end{cases}
\end{equation}
the new recursion of the adaptive tap-weights is
\begin{equation}
{\bf w}_{n+1} = {\bf w}_n + \mu e_n {\bf x}_n + \kappa {\bf g}({\bf w}_n),
\label{eq:recurofblockl0LMS}
\end{equation}
where $\mu$ denotes the step-size, $\kappa=\mu \lambda / 2$ adjusts the intensity of block-sparse penalty for given step-size, group zero-point attraction
$$
{\bf g}\!\left({\bf u} \right) \triangleq \big[g_1\!\left({\bf u} \right), g_2\!\left({\bf u} \right), \\ \cdots, g_L\!\left({\bf u} \right)\big]^{\rm T},
$$
and
\begin{equation}
g_k ({\bf u}) \! \triangleq \!
\begin{cases}
2 \alpha^2 u_{k} \! - \! \frac{\textstyle 2 \alpha {u_{k}}}{\textstyle \left\| {\bf u}_{[ \lceil k\!/\!P \rceil]}\right\|_2}, & \! 0 \!<\! \left\| {\bf u}_{[ \lceil k\!/\!P \rceil]}\right\|_2 \!\leqslant\! 1/\alpha; \\
0, & \!\! \textrm{elsewhere},
\end{cases}
\label{eq:gtfunction}
\end{equation}
where $\lceil\cdot\rceil$ denotes ceiling function. In order to avoid being divided by zero, a small positive constant $\delta$ is inserted into the denominator of (\ref{eq:gtfunction}) in real implementation.
The detailed algorithm is described in Table \ref{tab:BSLMS}.

\begin{table}[t]
\renewcommand{\arraystretch}{1.5}
\caption{The Procedure of BS-LMS.} \label{tab:BSLMS}
\begin{center}
\begin{tabular}{l}
\toprule[1pt]
{\bf Input:} \hspace{0.5em} $\{x_n, d_n\}_{n=0, 1, 2, \cdots}$, $L$, $P$, $\mu$, $\alpha$, $\kappa$, $\delta$; \\
{\bf Output:} \hspace{0.5em} $\{{\bf w}_{n}\}_{n=0, 1, 2, \cdots}.$\\
\hline
{\bf Initialization:} \hspace{0.5em} ${\bf w}_0 = {\bf 0}, N=L/P$.\\
{\bf for}  $n = 0, 1, 2, \cdots$\\
\hspace{1.5em} $e_n = d_n - {\bf x}_n^{\rm{T}} {\bf w}_n$; \\
\hspace{1.5em} {\bf for} $i = 1, 2, \cdots, N$ \\
\hspace{3.0em} $E_i = \left(\sum\limits_{j=(i-1)P+1}^{iP} |w_{j,n}|^2\right)^{1/2}$ ; \\
\hspace{1.5em} {\bf end for} \\
\hspace{1.5em} {\bf for} $k = 1, 2, \cdots, L$ \\
\hspace{3.0em} $g_k = 2 \alpha^2 w_{k,n} - 2 \alpha w_{k,n} \textrm{max} \left(\frac{\textstyle1}{\textstyle E_{\lceil k/P \rceil}+\delta}, \alpha \right) $ ; \\
\hspace{3.0em} $w_{k,n+1} = w_{k,n} + \mu e_n x_{n-k+1} + \kappa g_k$;\\
\hspace{1.5em} {\bf end for} \\
{\bf end for} \\
\bottomrule[1pt]
\end{tabular}
\end{center}
\end{table}

\subsection{Relationship with LMS and $l_0$-LMS}
\label{subsec:intuitiveexplanation}

First, one should notice that the group partition size $P$ is a predefined parameter, which is independent of the unknown system to be identified. Here we will discuss two special cases where $P = 1$ and $P = L$.

In the case where $P$ is equal to $1$, mixed $l_{2, 0}$ norm in \eqref{eq:l20norm} is equivalent to
$$
	\left\| {\bf u} \right\|_{2,0} = \!\left\| \left[\!\!
	\begin{array}{c}
	|u_1| \\ |u_2| \\ \vdots \\ |u_L|
	\end{array}
	\!\!\right] \right\|_0 = \|{\bf u}\|_0.
$$
Consequently, the proposed BS-LMS degenerates to $l_0$-LMS because their cost functions are identical. On the other hand, when $P$ is chosen as $L$,  mixed $l_{2, 0}$ norm in \eqref{eq:l20norm} is equivalent to
$$
	\left\| {\bf u} \right\|_{2,0} = \!\left\|
	\left\|{\bf u}\right\|_2 \right\|_0 = \left\{\!\begin{array}{ll}0, & {\bf u}={\bf 0};\\ 1, &{\rm elsewhere.}\end{array}\right.
$$
Therefore, it is readily accepted that BS-LMS degenerates to traditional LMS in this case. Based on the above discussion, one may find that BS-LMS is a generalization of LMS and $l_0$-LMS. Furthermore, the predefined group partition size controls the behavior of BS-LMS. In the next section and afterward, we will discuss how to choose the group partition size for the best performance.

\section{Performance of BS-LMS for General Sparse Systems}
\label{sec:theoreticalresult}

In this section, we follow the study in \cite{Su2} and generalize the theoretical results of $l_0$-LMS to that of the proposed BS-LMS. All conclusions in this section share the similar formulation as their counterparts of $l_0$-LMS, though the constants inside the conclusions are quit different. To save space, the details of assumptions and derivations are omitted, while the \emph{new} constants are listed in Appendix \ref{append:EoC} for reference. However, it should be emphasized that conducting the complicated derivations where the non-unit partition size $P$ is introduced is the main contribution of this section.

\subsection{Assumptions}

Following the approach in \cite{Su2}, we classify the unknown coefficients, correspondingly, the adaptive tap-weights, into three categories in group-partition-wise as
\begin{align*}
\textrm{Large coefficients:}\quad\mathcal{C}_{\rm L}(P) &\triangleq \left\{k\big| \left\|{\bf s}_{[\lceil k\!/\!P\rceil ]}\right\|_2\ge 1/\alpha\right\},\\
\textrm{Small coefficients:}\quad\mathcal{C}_{\rm S}(P) &\triangleq \left\{k\big| 0< \left\|{\bf s}_{[\lceil k\!/\!P\rceil]}\right\|_2 <1/\alpha\right\},\\
\textrm{Zero coefficients:}\quad\mathcal{C}_0(P) &\triangleq \left\{k\big| \left\|{\bf s}_{[\lceil k\!/\!P\rceil ]}\right\|_2 = 0\right\}.
\end{align*}
We further denote the number of tap-weights belonging to the nonzero group partitions by $Q(P)\triangleq|\mathcal{C}_{\rm L}(P)\cup\mathcal{C}_{\rm S}(P)|$, which is also termed the number of nonzero coefficients. However, one should recognize that some zero coefficients may be counted as nonzero coefficients, which is so called \emph{border effect} (Fig. \ref{fig:sparseresponse}(e, f)) and will be studied in next section. Comparing to those defined in \cite{Su2}, one may notice that the above introduced coefficients are closely dependent on the group partition size. Without confusion, however, they are sometimes abbreviated to $\mathcal{C}_{\rm L}, \mathcal{C}_{\rm S}, \mathcal{C}_{0}$, and $Q$.

We could demonstrate that all of the six assumptions in \cite{Su2} still hold because the new recursion does not destroy their validity. We further propose another assumption to make the analysis of BS-LMS feasible.
\begin{enumerate}
\setcounter{enumi}{6}
\item\label{assump:smalldiff}
The difference between the relative strength of $w_{k, n}$ and that of $s_{k}$ in $\mathcal{C}_{\rm S}$ is small enough to ratify the following approximation,
$$\frac{w_{k, n}}{\left\| {\bf w}_{\left[ \left\lceil k/P \right\rceil\right], n} \right\|_2} \approx \frac{s_{k}}{\left\| {\bf s}_{\left[ \left\lceil k/P \right\rceil\right]} \right\|_2}, \quad \forall k \in \mathcal{C}_{\rm S}.$$
\end{enumerate}

This assumption is considered proper due to the following reason. It is readily accepted that in traditional LMS the tap-weights of $w_{k,n}$ uniformly converge to their optimal values with \emph{i.i.d.} white Gaussian input. In the proposed BS-LMS, because of group zero-point attraction in \eqref{eq:recurofblockl0LMS}, the uniform convergence may not exist in a global manner, but may be available inside each group. Therefore, the temporary tap-weight and the unknown coefficient with respect to their strengths in group are supposed very close. In fact, the numerical experiment has verified that this assumption always remains valid, especially in high SNR scenarios.

\subsection{Steady-State Misalignment and Transient Behavior}

Defining ${\bf h}_n \triangleq {\bf w}_n - {\bf s}$ as the misalignment of tap-weights and following the similar approach in \cite{Su2}, the bias in steady state can be derived,
\begin{equation}
\label{eq:meanperformance}
\overline{h_{k, \infty}} \triangleq \lim_{n\rightarrow\infty} \overline{h_{k, n}}= \frac{\kappa}{\mu \sigma_x^2} g_k \left( {\bf s} \right), \quad \forall k=1,2,\cdots,L,
\end{equation}
where \emph{overline} denotes taking expectation and $\sigma_x^2$ denotes the variance of input signal. According to \eqref{eq:gtfunction}, one may find that the tap-weights are unbiased for large and zero group coefficients, while they are biased for small group coefficients.

\begin{lem}[the counterpart of Theorem 1 in \cite{Su2}]
\label{lem:MSD_kappa}
The steady-state mean square deviation (MSD) of BS-LMS is
\begin{align}
D_{\infty} &\triangleq \lim_{n\rightarrow\infty} D_n \triangleq \lim_{n\rightarrow\infty}\overline{{\bf h}_n^{\rm T}{\bf h}_n} \nonumber\\
&= \frac{\mu \sigma_v^2 L}{\Delta_L} + \beta_1 \kappa^2 - \beta_2 \kappa \sqrt{\kappa^2 + \beta_3},\label{eq:Dinftyclose}
\end{align}
where $\sigma_v^2$ denotes the variance of measurement noise, $\Delta_L$ is defined in \eqref{def:DeltaL}, $\left\{ \beta_i \right\}_{i=1,2,3}$ are defined in (\ref{def:beta_1}), (\ref{def:beta_2}), and (\ref{def:beta_3}) in Appendix \ref{append:EoC}, respectively. The step-size should satisfy
\begin{equation}
\label{eq:conditionofmu}
0 < \mu < \mu_{\rm{max}} \triangleq \frac{2}{\left( L + 2\right) \sigma_x^2}
\end{equation}
to guarantee convergence .
\end{lem}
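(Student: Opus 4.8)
The plan is to follow the MSD derivation of $l_0$-LMS in \cite{Su2} and lift every per-tap quantity to its per-group analogue. First I would subtract $\mathbf{s}$ from the update \eqref{eq:recurofblockl0LMS} and use $e_n = v_n - \mathbf{x}_n^{\rm T}\mathbf{h}_n$, which follows from \eqref{eq:outputsignal}--\eqref{eq:outputerror}, to get the misalignment recursion
\begin{equation*}
\mathbf{h}_{n+1} = \left(\mathbf{I} - \mu\,\mathbf{x}_n\mathbf{x}_n^{\rm T}\right)\mathbf{h}_n + \mu\,v_n\mathbf{x}_n + \kappa\,\mathbf{g}(\mathbf{w}_n).
\end{equation*}
Forming $\mathbf{h}_{n+1}^{\rm T}\mathbf{h}_{n+1}$ and taking expectations, I would invoke the six independence assumptions carried over from \cite{Su2} ($\mathbf{x}_n$ independent of $\mathbf{h}_n$ and $\mathbf{g}(\mathbf{w}_n)$; $v_n$ zero-mean, white and independent of all else). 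The terms linear in $v_n$ vanish, the noise term contributes $\mu^2\sigma_v^2 L\sigma_x^2$, and the white-Gaussian fourth-moment identity $\overline{(\mathbf{x}_n^{\rm T}\mathbf{x}_n)\mathbf{x}_n\mathbf{x}_n^{\rm T}} = (L+2)\sigma_x^4\,\mathbf{I}$ collapses $\overline{(\mathbf{I}-\mu\mathbf{x}_n\mathbf{x}_n^{\rm T})^2}$ to the scalar factor $f(\mu)\,\mathbf{I}$ with $f(\mu) = 1 - 2\mu\sigma_x^2 + \mu^2(L+2)\sigma_x^4$.

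This produces the scalar recursion
\begin{equation*}
D_{n+1} = f(\mu)\,D_n + \mu^2\sigma_v^2 L\sigma_x^2 + 2\kappa(1-\mu\sigma_x^2)\,\overline{\mathbf{h}_n^{\rm T}\mathbf{g}(\mathbf{w}_n)} + \kappa^2\,\overline{\mathbf{g}(\mathbf{w}_n)^{\rm T}\mathbf{g}(\mathbf{w}_n)}.
\end{equation*}
The step-size condition drops out of the homogeneous mode: $f(\mu) < 1$ is equivalent to $\mu(L+2)\sigma_x^2 < 2$, while $f(\mu) > -1$ holds for every $\mu$ because $2 - 2\mu\sigma_x^2 + \mu^2(L+2)\sigma_x^4$ has negative discriminant, so the contraction condition $|f(\mu)| < 1$ is exactly \eqref{eq:conditionofmu}; this also makes $1 - f(\mu) = \mu\sigma_x^2\Delta_L$ positive, with $\Delta_L$ as in \eqref{def:DeltaL}.

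Letting $n\to\infty$ and setting $D_{n+1}=D_n=D_\infty$ gives the algebraic balance $\mu\sigma_x^2\Delta_L\,D_\infty = \mu^2\sigma_v^2 L\sigma_x^2 + 2\kappa(1-\mu\sigma_x^2)\,\overline{\mathbf{h}_\infty^{\rm T}\mathbf{g}_\infty} + \kappa^2\,\overline{\mathbf{g}_\infty^{\rm T}\mathbf{g}_\infty}$; dividing the noise term already yields the leading $\mu\sigma_v^2 L/\Delta_L$ of \eqref{eq:Dinftyclose}. It remains to evaluate the two attraction correlations, which I would split over $\mathcal{C}_{\rm L}$, $\mathcal{C}_{\rm S}$, $\mathcal{C}_0$ using \eqref{eq:gtfunction}. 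On $\mathcal{C}_{\rm L}$ the group norm exceeds $1/\alpha$, so $\mathbf{g}\equiv\mathbf{0}$ and nothing is contributed. On $\mathcal{C}_{\rm S}$, Assumption~\ref{assump:smalldiff} replaces $w_{k,n}/\|\mathbf{w}_{[\lceil k/P\rceil],n}\|_2$ by $s_k/\|\mathbf{s}_{[\lceil k/P\rceil]}\|_2$, linearising $g_k$ in $w_{k,n}$ so that these correlations reduce to the steady-state bias \eqref{eq:meanperformance} and the small-group MSD. On $\mathcal{C}_0$, where $\mathbf{s}_{[i]}=\mathbf{0}$ and $\mathbf{h}_{[i]}=\mathbf{w}_{[i]}$, the per-group sums collapse to $\mathbf{h}_{[i]}^{\rm T}\mathbf{g}_{[i]} = 2\alpha^2\|\mathbf{w}_{[i]}\|_2^2 - 2\alpha\|\mathbf{w}_{[i]}\|_2$ and $\mathbf{g}_{[i]}^{\rm T}\mathbf{g}_{[i]} = 4\alpha^2(\alpha\|\mathbf{w}_{[i]}\|_2 - 1)^2$, so everything is governed by the first two moments of the group norm $\|\mathbf{w}_{[i]}\|_2$.

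The hard part will be closing the system through these zero-group norm moments and handling the border effect. Modelling the steady-state taps in a zero group as jointly Gaussian turns $\overline{\|\mathbf{w}_{[i]}\|_2^2}$ into $P$ times the per-tap variance and $\overline{\|\mathbf{w}_{[i]}\|_2}$ into that standard deviation times a $P$-dependent $\Gamma$-function ratio (the generalisation of the $\sqrt{2/\pi}$ factor used at $P=1$ in \cite{Su2}); it is precisely this linear-in-standard-deviation dependence that eventually injects a square root. Writing a separate steady-state variance balance for the zero groups gives a relation that is quadratic in the zero-group standard deviation with $\kappa$ entering its coefficients, and solving that quadratic expresses the standard deviation as a $\sqrt{\kappa^2+\beta_3}$-type function of $\kappa$; substituting back into the total balance, together with the $\kappa^2$ squared-bias contribution from $\mathcal{C}_{\rm S}$, assembles the $+\beta_1\kappa^2 - \beta_2\kappa\sqrt{\kappa^2+\beta_3}$ tail of \eqref{eq:Dinftyclose}. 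Carrying the non-unit partition size $P$ correctly through these norm moments and through the border effect (zero taps absorbed into nonzero groups and vice versa), so that the resulting $\beta_i$ agree with the constants in Appendix~\ref{append:EoC}, is the delicate new bookkeeping; the surrounding algebra is otherwise routine.
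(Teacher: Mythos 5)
Your proposal is correct and follows essentially the same route as the paper, which omits the derivation and defers to the $l_0$-LMS analysis in \cite{Su2}: misalignment recursion under the carried-over independence assumptions, the Gaussian fourth-moment collapse giving $1-f(\mu)=\mu\sigma_x^2\Delta_L$ and the stability condition \eqref{eq:conditionofmu}, the split over $\mathcal{C}_{\rm L},\mathcal{C}_{\rm S},\mathcal{C}_0$ with Assumption~\ref{assump:smalldiff} producing the $G({\bf s})$, $G'({\bf s})$ terms, and the zero-group $\chi$-moment treatment whose Gamma-ratio factor is exactly $\theta(P)$ in \eqref{eq:thetaP}. Your ``separate steady-state variance balance for the zero groups'' is precisely the quadratic \eqref{eq:origomegaequation} for $\omega$ (equivalently the coupled two-dimensional recursion behind the matrix ${\bf A}$ in \eqref{def:A}), and solving it is indeed what injects the $\sqrt{\kappa^2+\beta_3}$ structure into \eqref{eq:Dinftyclose}.
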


\begin{lem}[the counterpart of Corollary 1 in \cite{Su2}]
\label{lem:MSD_kappa1}
In order to make the steady-state MSD be as small as possible, the best choice for $\kappa$ is
\begin{equation}
\label{eq:kappa_optimal}
\kappa_{\rm{opt}} = \frac{\sqrt{\beta_3}}{2} \left( \sqrt[4]{\frac{\beta_1 + \beta_2}{\beta_1 - \beta_2}} - \sqrt[4]{\frac{\beta_1 - \beta_2}{\beta_1 + \beta_2}}\right)
\end{equation}
and the minimum steady-state MSD is
\begin{equation}
\label{eq:MSD_kappa}
D_{\infty}^{\rm{min}} = \frac{\mu \sigma_v^2 L}{\Delta_L} + \frac{\beta_3}{2} \left( \sqrt{\beta_1^2 - \beta_2^2} - \beta_1 \right).
\end{equation}
\end{lem}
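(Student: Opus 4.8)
The first term $\mu \sigma_v^2 L / \Delta_L$ in \eqref{eq:Dinftyclose} is independent of $\kappa$, so the plan is to minimize the remaining part $f(\kappa) \triangleq \beta_1 \kappa^2 - \beta_2 \kappa \sqrt{\kappa^2 + \beta_3}$ over the admissible range $\kappa > 0$ (recall $\kappa = \mu\lambda/2 > 0$). Before differentiating, I would record two sanity checks that pin down an interior minimizer: $f(0) = 0$ with $f'(0^+) = -\beta_2\sqrt{\beta_3} < 0$, so $f$ is initially decreasing, while $f(\kappa) \sim (\beta_1 - \beta_2)\kappa^2 \to +\infty$ as $\kappa \to \infty$. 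Both rely on the inequality $\beta_1 > \beta_2 > 0$, which I would first extract from the explicit definitions \eqref{def:beta_1}--\eqref{def:beta_3} in Appendix \ref{append:EoC}; this positivity is exactly what guarantees that all the radicals appearing in the statement are real.

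Next I would set $f'(\kappa) = 0$. Differentiating gives $2\beta_1 \kappa = \beta_2\sqrt{\kappa^2 + \beta_3} + \beta_2 \kappa^2 / \sqrt{\kappa^2 + \beta_3}$, which after clearing the denominator collapses to the clean stationarity relation
\[
2\beta_1 \kappa \sqrt{\kappa^2 + \beta_3} = \beta_2\left(2\kappa^2 + \beta_3\right).
\]
Squaring and writing $y \triangleq \kappa^2$ turns this into the quadratic $4(\beta_1^2 - \beta_2^2)y^2 + 4\beta_3(\beta_1^2 - \beta_2^2)y - \beta_2^2\beta_3^2 = 0$. The key algebraic simplification is that its discriminant factors neatly: the quantity under the root reduces to $16\beta_3^2\beta_1^2(\beta_1^2 - \beta_2^2)$, so its square root is $4\beta_3\beta_1\sqrt{\beta_1^2 - \beta_2^2}$. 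Since the product of the two roots is negative, exactly one is positive; discarding the negative one yields $\kappa_{\rm opt}^2 = \frac{\beta_3}{2}\bigl(\beta_1/\sqrt{\beta_1^2 - \beta_2^2} - 1\bigr)$.

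To recover the fourth-root form in \eqref{eq:kappa_optimal}, I would verify the reverse identity: writing $a = \sqrt[4]{(\beta_1+\beta_2)/(\beta_1-\beta_2)}$, one has $a^2 + a^{-2} = 2\beta_1/\sqrt{\beta_1^2-\beta_2^2}$, whence $\frac{\beta_3}{4}(a - a^{-1})^2 = \frac{\beta_3}{4}(a^2 + a^{-2} - 2)$ equals the $\kappa_{\rm opt}^2$ obtained above. For the minimum value I would avoid resubstituting the radical directly and instead use the stationarity relation to eliminate the square root: it gives $\beta_2\kappa\sqrt{\kappa^2+\beta_3} = \beta_2^2(2\kappa^2+\beta_3)/(2\beta_1)$, so $f(\kappa_{\rm opt}) = [2(\beta_1^2-\beta_2^2)\kappa_{\rm opt}^2 - \beta_2^2\beta_3]/(2\beta_1)$; plugging in $\kappa_{\rm opt}^2$ and simplifying produces $\frac{\beta_3}{2}(\sqrt{\beta_1^2-\beta_2^2} - \beta_1)$, which is precisely \eqref{eq:MSD_kappa}.

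I expect the main obstacle to be bookkeeping rather than conceptual: confirming $\beta_1 > \beta_2 > 0$ from the appendix constants (and hence that the single positive root really is the global minimizer over $\kappa > 0$, not merely a critical point), and carrying out the discriminant factorization together with the $a - a^{-1}$ matching without sign errors. A brief endpoint-comparison or second-derivative argument would close the minimality claim, since squaring the first-order condition can in principle introduce a spurious root.
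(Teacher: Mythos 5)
Your proposal is correct and takes essentially the same route as the paper's: the paper omits the derivation entirely, deferring to Corollary 1 of \cite{Su2}, and that derivation is precisely the direct minimization of the $\kappa$-dependent part $\beta_1\kappa^2-\beta_2\kappa\sqrt{\kappa^2+\beta_3}$ of \eqref{eq:Dinftyclose} that you carry out. Your algebra checks out at every step --- the stationarity relation $2\beta_1\kappa\sqrt{\kappa^2+\beta_3}=\beta_2(2\kappa^2+\beta_3)$, the quadratic in $\kappa^2$ with discriminant $16\beta_3^2\beta_1^2(\beta_1^2-\beta_2^2)$, the identity $a^2+a^{-2}=2\beta_1/\sqrt{\beta_1^2-\beta_2^2}$ matching the fourth-root form, and the radical-free evaluation of the minimum --- and your attention to $\beta_1>\beta_2>0$ plus the uniqueness of the positive critical point (note that squaring is reversible here since both sides of the stationarity relation are positive for $\kappa>0$) properly closes the global-minimality claim.
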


\begin{lem}[the counterpart of Theorem 2 in \cite{Su2}]
\label{lem:MSD_instant}
For a given unknown system, the closed form of instantaneous MSD is
\begin{equation}
\label{eq:MSD_instant_closeform}
D_n = c_1 \lambda_1^n + c_2 \lambda_2^n + c_3 \lambda_3^n + D_{\infty},
\end{equation}
where $\lambda_1$ and $\lambda_2$ are the eigenvalues of matrix ${\bf A}$, which is defined in (\ref{def:A}). $c_1$ and $c_2$ are coefficients defined by initial value (please refer to Lemma 1 in \cite{Su2}). The expressions of constants $\lambda_3$ and $c_3$ are listed in \eqref{def:lambda_3} and \eqref{def:c_3}, respectively, in Appendix \ref{append:EoC}.
\end{lem}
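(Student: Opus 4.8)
\section*{Proof proposal}

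The plan is to track the first two moments of the misalignment ${\bf h}_n = {\bf w}_n - {\bf s}$ and to show that the mean square deviation $D_n = \overline{{\bf h}_n^{\rm T}{\bf h}_n}$ obeys an inhomogeneous linear recursion whose solution is a sum of three geometric modes plus the constant steady-state value of Lemma \ref{lem:MSD_kappa}. Subtracting ${\bf s}$ from the update \eqref{eq:recurofblockl0LMS} and substituting $e_n = v_n - {\bf x}_n^{\rm T}{\bf h}_n$ yields
\[
{\bf h}_{n+1} = \left({\bf I} - \mu {\bf x}_n {\bf x}_n^{\rm T}\right){\bf h}_n + \mu v_n {\bf x}_n + \kappa {\bf g}({\bf w}_n),
\]
which is linear in ${\bf h}_n$ apart from the group zero-point attraction ${\bf g}({\bf w}_n)$ of \eqref{eq:gtfunction}.

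First I would derive the mean recursion. Using the independence assumptions and $\overline{{\bf x}_n {\bf x}_n^{\rm T}} = \sigma_x^2 {\bf I}$ for white Gaussian input, together with the linearization of ${\bf g}$ about ${\bf s}$ justified by Assumption \ref{assump:smalldiff}, one obtains $\overline{{\bf h}_{n+1}} = (1-\mu\sigma_x^2)\overline{{\bf h}_n} + \kappa{\bf g}({\bf s})$. Hence each mean component relaxes to the steady-state bias \eqref{eq:meanperformance} at the single geometric rate $\lambda_3 = 1-\mu\sigma_x^2$ of \eqref{def:lambda_3}, so that $\overline{{\bf h}_n} - \overline{{\bf h}_\infty} = \lambda_3^n\,(\overline{{\bf h}_0} - \overline{{\bf h}_\infty})$. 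This is the origin of the third mode.

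Next I would expand $D_{n+1} = \overline{{\bf h}_{n+1}^{\rm T}{\bf h}_{n+1}}$. The purely quadratic LMS part contracts $D_n$ by the usual factor $1 - 2\mu\sigma_x^2 + (L+2)\mu^2\sigma_x^4$, obtained from the Gaussian fourth-moment identity $\overline{{\bf x}_n{\bf x}_n^{\rm T}{\bf A}{\bf x}_n{\bf x}_n^{\rm T}} = \sigma_x^4({\rm tr}({\bf A}){\bf I} + 2{\bf A})$; the noise injects the constant $\mu^2\sigma_v^2\sigma_x^2 L$; and the term $\kappa^2\overline{{\bf g}({\bf w}_n)^{\rm T}{\bf g}({\bf w}_n)}$ contributes a further constant. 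The remaining cross term $2\kappa\,\overline{{\bf h}_n^{\rm T}({\bf I}-\mu{\bf x}_n{\bf x}_n^{\rm T}){\bf g}({\bf w}_n)} = 2\kappa\lambda_3\,\overline{{\bf h}_n^{\rm T}{\bf g}({\bf w}_n)}$ is the crucial one: linearizing ${\bf g}$ via Assumption \ref{assump:smalldiff} and sorting the coordinates into the groups $\mathcal{C}_{\rm L}$, $\mathcal{C}_{\rm S}$, $\mathcal{C}_0$ splits it into (i) a part proportional to the decaying mean $\overline{{\bf h}_n}\propto\lambda_3^n$ and (ii) a part that feeds back into a group-restricted second moment over $\mathcal{C}_{\rm S}$. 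Collecting everything, the dynamics close into a two-dimensional inhomogeneous linear system $\bm{\theta}_{n+1} = {\bf A}\bm{\theta}_n + {\bf b}\,\lambda_3^n + {\bf c}$, where $\bm{\theta}_n$ stacks $D_n$ with the auxiliary second moment and ${\bf A}$ is the matrix of \eqref{def:A}.

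Finally I would solve this recursion. The homogeneous part supplies the modes $\lambda_1^n$ and $\lambda_2^n$ with $\lambda_1,\lambda_2$ the eigenvalues of ${\bf A}$; the particular response to the exponential forcing ${\bf b}\,\lambda_3^n$ supplies the mode $c_3\lambda_3^n$ with $c_3$ as in \eqref{def:c_3} (valid provided $\lambda_3$ is not an eigenvalue of ${\bf A}$); and the particular response to the constant ${\bf c}$ reproduces $D_\infty$ of \eqref{eq:Dinftyclose}. Reading off the $D_n$ coordinate yields \eqref{eq:MSD_instant_closeform}, and the coefficients $c_1,c_2$ are then fixed by the initial condition ${\bf w}_0 = {\bf 0}$ exactly as in Lemma 1 of \cite{Su2}. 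The main obstacle is the step that closes the second-moment dynamics: evaluating the Gaussian fourth-order moments and the cross term with the non-unit partition size $P$, and verifying that the linearized group zero-point attraction does not couple $D_n$ to more than one additional scalar, so that the system genuinely stays two-dimensional --- precisely the ``complicated derivation where the non-unit partition size $P$ is introduced'' that the paper flags as the contribution of this section.
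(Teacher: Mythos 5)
Your skeleton matches the route the paper takes (it defers the details to the $l_0$-LMS derivation of \cite{Su2}, supplying only the new $P$-dependent constants in Appendix \ref{append:EoC}): the mean recursion gives the mode $\lambda_3=\Delta_0=1-\mu\sigma_x^2$; the second-order dynamics are closed into a two-dimensional affine recursion $\bm{\theta}_{n+1}={\bf A}\bm{\theta}_n+{\bf b}_n$ with ${\bf b}_n$ a constant plus a $\Delta_0^{n+1}$ term; the homogeneous part yields $\lambda_1^n,\lambda_2^n$, the exponential forcing yields $c_3\lambda_3^n$ (your nonresonance caveat is exactly why ${\rm det}(\lambda_3{\bf I}-{\bf A})$ sits in the denominator of \eqref{def:c_3}), the constant forcing reproduces $D_\infty$, and $c_1,c_2$ come from ${\bf w}_0={\bf 0}$. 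Your spot checks are consistent with the paper's constants: the contraction factor $1-2\mu\sigma_x^2+(L+2)\mu^2\sigma_x^4=1-\mu\sigma_x^2\Delta_L$ is $a_{11}$ in \eqref{def:A}, and the noise injection $L\mu^2\sigma_x^2\sigma_v^2$ is the first term of \eqref{def:b_0n}.

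There is, however, one concrete misstep and one missing device, both located in the closure step you yourself flag as the crux. First, the auxiliary state must be the second moment restricted to $\mathcal{C}_0$, not to $\mathcal{C}_{\rm S}$ as you write. On $\mathcal{C}_{\rm S}$, Assumption \ref{assump:smalldiff} renders $g_k({\bf w}_n)$ effectively deterministic and tied to $g_k({\bf s})$, so the $\mathcal{C}_{\rm S}$ contribution is entirely the part you call (i): it rides on the decaying mean and produces the $G({\bf s})$, $G'({\bf s})$, and $\Delta_0^{n+1}$ terms of \eqref{def:b_0n} --- forcing, not feedback. The stochastic feedback comes from the zero groups, where $w_{k,n}=h_{k,n}$ so that $g_k({\bf w}_n)$ depends on ${\bf h}_n$ itself; this is visible in the paper's constants through the factor $(L-Q)=|\mathcal{C}_0|$ in $a_{21}=(L-Q)\mu^2\sigma_x^4$ and in \eqref{def:b_1n}. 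Second, on $\mathcal{C}_0$ the cross term $\overline{h_{k,n}\,g_k({\bf w}_n)}$ involves $\overline{w_{k,n}^2\big/\big\|{\bf w}_{[\lceil k/P\rceil],n}\big\|_2}$, which is \emph{not} linear in any second moment; the dynamics become linear (and two-dimensional) only after the additional modeling step of treating the zero-group taps as i.i.d.\ Gaussian and evaluating the resulting $\chi$-type group moments. That step is what generates the group-size constant $\theta(P)$ of \eqref{eq:thetaP} and the steady-state scale $\omega$ solving the quadratic \eqref{eq:origomegaequation}, which enter ${\bf A}$ through the entries $-\theta(P)\sqrt{8/\pi}\,\kappa\alpha\Delta_0/\omega$. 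You correctly identify this as the main obstacle but do not carry it out; it is precisely the non-unit-$P$ computation that distinguishes this lemma from Theorem 2 of \cite{Su2}, and without it the system you write down does not actually close.
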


\begin{rmk}
Based on the above lemmas, we have successfully generalize the theoretical results of $l_0$-LMS to BS-LMS. As we have mentioned, most of their formulations are exactly identical, whereas the constants included are rather different. To totally understand the above contents, the readers are recommended to refer to \cite{Su2} and compare those constants in Appendix A of \cite{Su2} with those in Appendix A of this paper. Based on the foundation of this section, we have prepared to comprehensively study the performance of BS-LMS.
\end{rmk}

\section{Markov-Gaussian Model for Generating Block-Sparse Systems}

\label{sec:systemmodel}

Inspired by the characteristic of nonzero (or zero) coefficients clustering in blocks, we propose a Markov-Gaussian (M-G) model with parameter set, $\mathcal{M}(L, p_1, p_2, \sigma_s^2)$, to generate a wide range of block-sparse systems. One will notice that the proposed one is a simplified Ising model that fits to the scenario in this study.

Utilizing the proposed model, the impulse response $\bf s$ of a block-sparse system is generated in two steps. In the first step, the zero and nonzero sets which contain the index of zero coefficients and nonzero coefficients, respectively, are produced by a Markov process. From $1$ to $L$, index $k$ is iteratively and stochastically determined to fall into zero or nonzero sets based on the class of index $(k-1)$. Please refer to Fig. \ref{fig:MarkovChain} for detail, where for $k=2,3,\cdots,L$
\begin{align*}
	{\rm P}\left\{s_k = 0 | s_{k-1} = 0\right\} = p_1,\\
	{\rm P}\left\{s_k \ne 0 | s_{k-1} \ne 0\right\} = p_2,
\end{align*}
and the category of $s_{1}$ is decided by $s_{0}$, which is an imaginative scaler and fixed to zero. In the second step, after the nonzero set is determined, the amplitudes of nonzero coefficients are independently and identically drawn from a Gaussian distribution with zero mean and variance $\sigma_s^2$.

\begin{figure}[t]
\begin{center}
\includegraphics[width=0.6\textwidth]{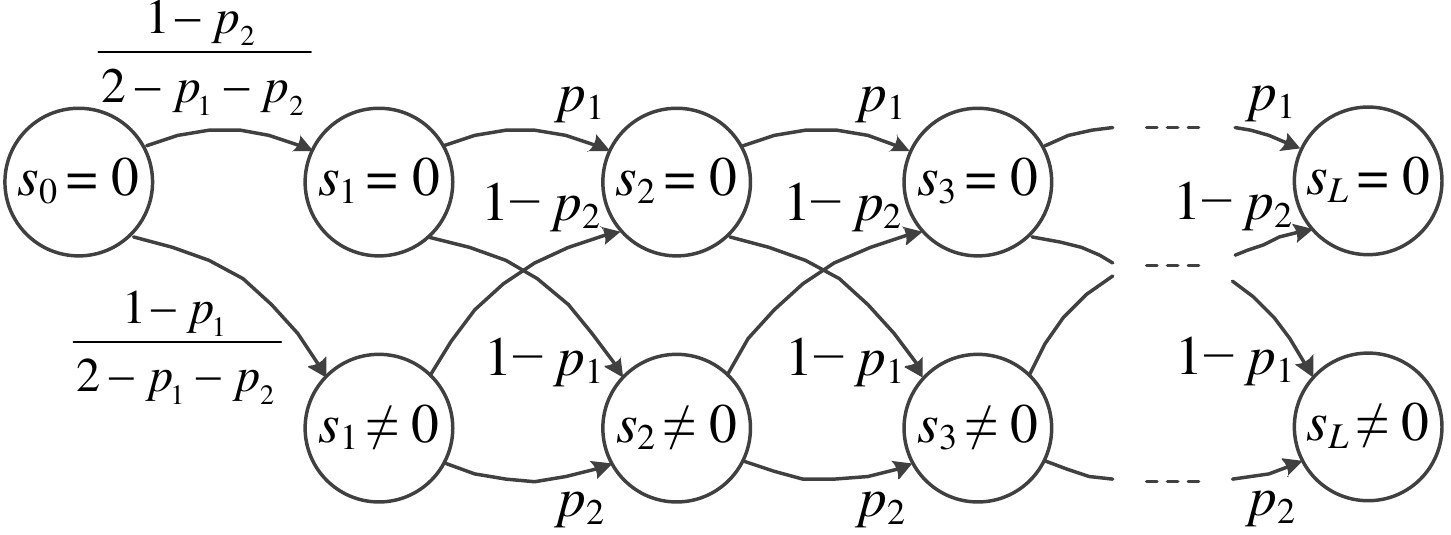}
\caption{The proposed model for generating block-sparse impulse response.}
\label{fig:MarkovChain}
\end{center}
\end{figure}

According to Fig. \ref{fig:MarkovChain}, for the sake of producing a sparse system response, it should be guaranteed that $(1 - p_2)$ is far larger than $(1 - p_1)$. Furthermore, both $p_1$ and $p_2$ need to be very close to $1$ in order to generate a clustering impulse response. The proposed model has several properties that demonstrate its advantages and will be used to analyze the proposed BS-LMS in the following section.

\begin{prp}\label{prp:sparsity}
(\emph{Sparsity and block size}) For given $\mathcal{M}(L, p_1, p_2, \sigma_s^2)$, the average percentage of nonzero coefficients, the average block size of nonzero and zero coefficients of the generated impulse responses, which are denoted by $\overline S, \overline{B_{\rm nz}}$, and $\overline{B_{\rm z}}$, respectively, follow
\begin{align*}
	\overline S = \frac{1 - p_1}{2 - p_1 - p_2}, \quad \overline{B_{\rm nz}} = \frac{1}{1 - p_2},\quad\textrm{and}\quad \overline{B_{\rm z}} = \frac{1}{1 - p_1}.
\end{align*}
\end{prp}

Several examples of block-sparse systems generated by $L=800, \sigma_s^2=1$, and various $(p_1, p_2)$ are showed in Fig. \ref{fig:coeff_markov}. Inside every row and every column, the average block size of zero and nonzero coefficients increase, respectively, with respect to $p_1$ and $p_2$. Moreover, the three responses located on the diagonal subplots satisfy $\overline S = 0.1$ and have $80$ expected nonzero coefficients.

\begin{figure}[t]
\begin{center}
\includegraphics[width=0.8\textwidth]{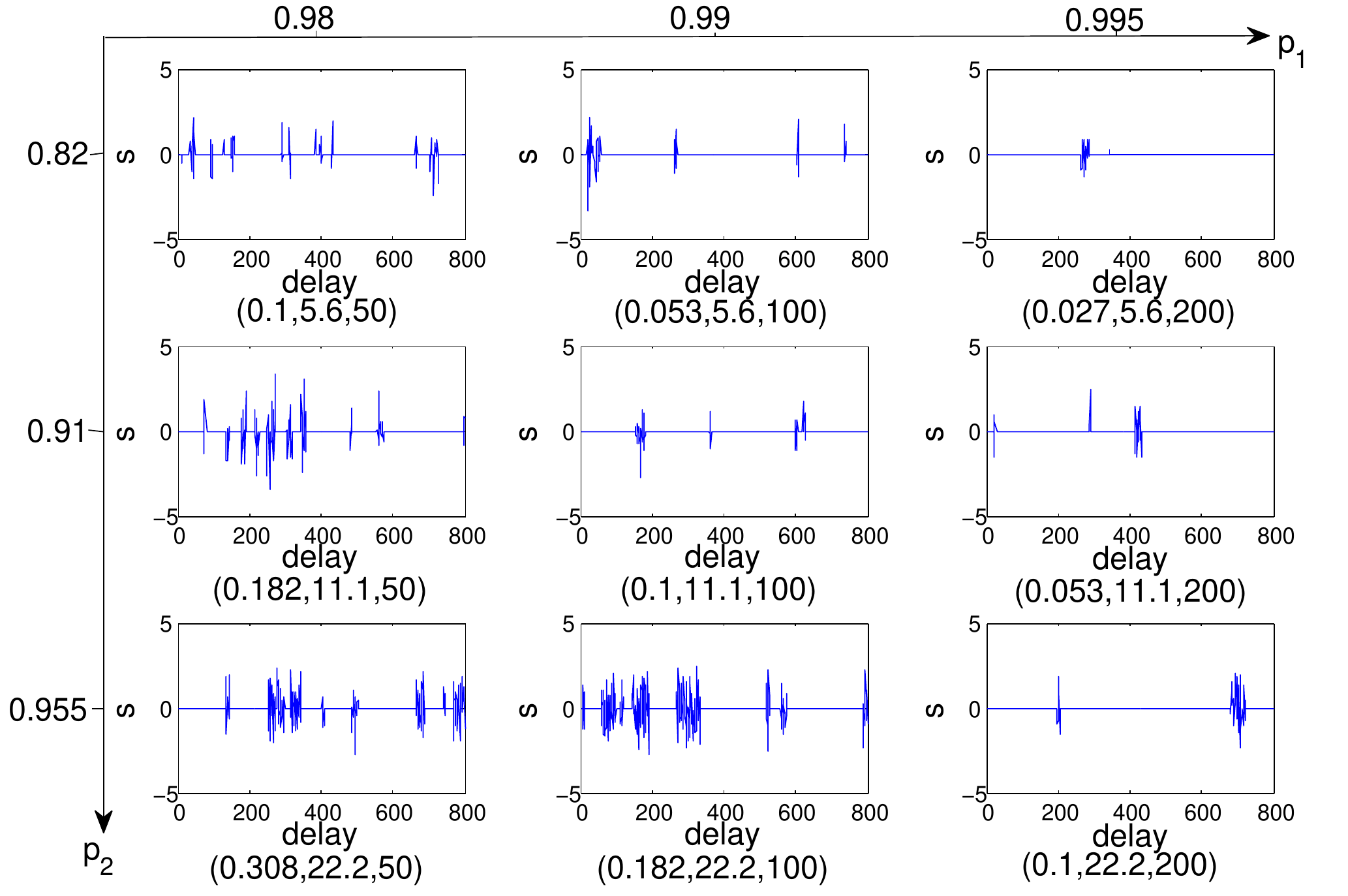}
\caption{The examples of block-sparse impulse responses generated by the proposed M-G model with $L=800, \sigma_s^2=1$ and various $(p_1, p_2)$. $(\overline{S}, \overline{B_{\rm nz}}, \overline{B_{\rm z}})$ with corresponding parameter set is listed below each subfigure. The three systems located on the diagonal subplots share the same sparsity.}
\label{fig:coeff_markov}
\end{center}
\end{figure}

Next we will study the border effect quantitatively based on the proposed model.
\begin{prp}\label{prp:bordereffect}
(\emph{Border effect}) For given $\mathcal{M}(L, p_1, p_2, \sigma_s^2)$ and a predefined group partition size $P$, we can calculate the average number of tap-weights belonging to nonzero groups, $\overline{Q}$, to describe the intensity of the border effect,
\begin{equation}
\label{def:Q_mean}
\overline{Q} = L \left( 1 - (1 - \overline{S}) p_1^{P-1}\right).
\end{equation}
\end{prp}

\begin{proof}
The proof is postponed to Appendix \ref{append:bordereffect}.
\end{proof}

According to this property, one may find that $\overline{Q}$ becomes larger when $P$ increases, which shows the border effect is heavier.

At last, we will show the relationship between the proposed M-G model and the Ising model \cite{McCoy}, which is a prototypical Markov random field.

\begin{prp}\label{prp:isingmodel}
(\emph{Relation with Ising model}) The proposed Markov model for determining zero and nonzero coefficients sets is a special case of the Ising model. Specifically, for the Ising model, its probability density function is
$$
p({\rm sp}({\bf s}); \bm{\zeta}, \bm{\zeta}') = {\rm exp} \left\{ \sum_{i=1}^{L} \zeta_i s_i + \sum_{i=1}^{L - 1} \zeta'_i s_i s_{i+1} - Z_{\bf s}(\bm{\zeta}, \bm{\zeta}')\right\},
$$
where ${\rm sp}({\bf s})$ denotes the support of $\bf s$,
$$
	{\rm sp}(s_i) = \left\{\!\!\!\begin{array}{rl}1, & s_i\ne 0;\\
	-1, & s_i=0,\end{array}\right.
$$
and $Z_{\bf s}(\bm{\zeta}, \bm{\zeta}')$ is a strictly convex function with respect to $\bm{\zeta}$ and $\bm{\zeta}'$ that normalizes the distribution so that it integrates to one. When
\begin{align}
\zeta_i & =
\begin{cases}
\myfrac{1}{4} \ln \myfrac{p_2(1-p_1)}{p_1(1 - p_2)}, & i = 1,L; \\
\myfrac{1}{2} \ln \myfrac{p_2}{p_1}, & i = 2,\cdots,L-1,
\end{cases} \nonumber \\
\zeta'_i & = \frac{1}{4} \ln \frac{p_1 p_2}{(1 - p_1)(1 - p_2)}, \quad i=1,\cdots,L-1, \nonumber
\end{align}
the Ising model degenerates to the proposed Markov model, which is equipped by concise and meaningful parameters.
\end{prp}

As far as we know, this is the first time that Markov process is used to describe block-sparsity. Besides the scenarios of system identification, the proposed M-G model may be utilized in various research area to generate arbitrary system response with given block-sparse constraint.

\section{Performance of BS-LMS for Block-Sparse Systems}
\label{sec:performanceanalysis}

The behavior of BS-LMS in block-sparse scenario is further studied in this section by utilizing the proposed M-G model. New assumptions are adopted as follows.

\begin{enumerate}
\setcounter{enumi}{7}
\item\label{assump:smallP}
For a given unknown system, which is supposed to be long and sparse, the partition size $P$ is small with respect to the filter length to guarantee that the system response in group-partition-wise is still sparse, i.e., $2 \ll Q \ll L$.
\item\label{assump:markovmodel}
The unknown system response to be identified is generated by the proposed M-G model, $\mathcal{M}(L, p_1, p_2, \sigma_s^2)$. Please refer to Fig. \ref{fig:application} for illustration.
\end{enumerate}

\begin{figure}[t]
\begin{center}
\includegraphics[scale=0.6]{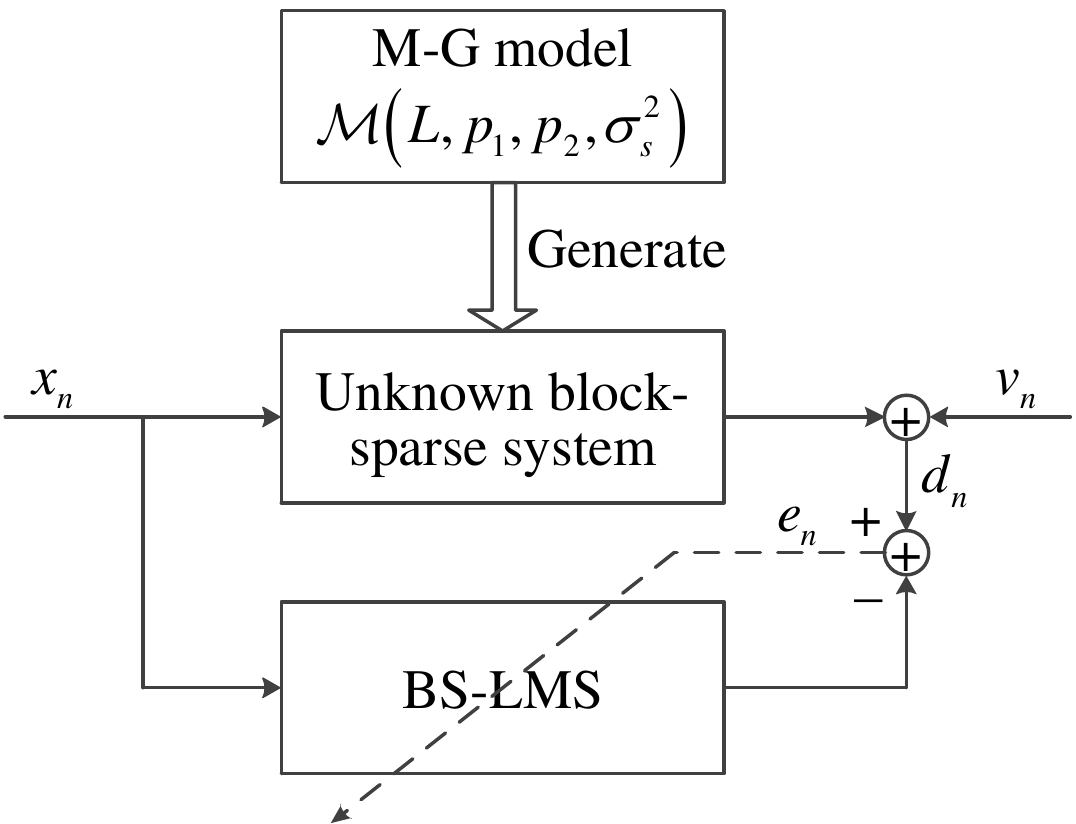}
\caption{The framework of studying the average performance of BS-LMS in identifying block-sparse systems generated by a M-G model.}
\label{fig:application}
\end{center}
\end{figure}

Assumption \ref{assump:smallP}) makes sense because $P$ is an important predefined parameter that need to be elaborately selected. Furthermore, it can be accepted that BS-LMS penalizes sparsity in group-partition-wise and an overlarge $P$ definitely destroys the sparsity.

The introduction of M-G model by Assumption \ref{assump:markovmodel}) makes it feasible to analyze BS-LMS with respect to block-sparse systems. As a consequence, we will study the performance of BS-LMS in the sense of \emph{expected} unknown system response, which is generated by the given M-G model with specified parameters. Therefore, the average minimum steady-state MSD (AMS-MSD), the optimal group partition size, and the average minimum transient MSD (AMT-MSD), denoted as $\overline{D_{\infty}^{\rm min}}, P_{\rm opt}$, and $\overline{D_n^{\rm min}}$, respectively, will be derived in the following text.

Finally we demonstrate that BS-LMS outperforms $l_0$-LMS in convergence rate significantly when the group partition size is chosen close to its optimum.

\subsection{Steady-State Performance and Optimal Group Partition Size}

The following theorem presents the effect of the group partition size on the AMS-MSD and the selection of the optimal partition size.

\begin{thm}
\label{thm:MSD_B}
For given block-sparse systems generated by the proposed M-G model, the AMS-MSD of BS-LMS is
\begin{equation}\label{eq:MSD_B}
\overline{D_{\infty}^{\rm{min}}} \approx \frac{\mu\sigma_v^2}{\Delta_{\overline Q}}\left({\overline Q} + \frac{\sqrt{2 \pi(L - {\overline Q}) \overline{G({\bf s})} }}{\alpha \theta(P) {\Delta_{\overline{Q}}}}\right).
\end{equation}
where $\overline{Q}, \Delta_{\overline{Q}}$, and $\overline{G(\bf s)}$ are defined in \eqref{def:Q_mean}, (\ref{def:mean_Delta_Q}), and (\ref{def:Gamma1_mean}), respectively. The optimal group partition size could be numerically found by
\begin{equation}\label{eq:Popt}
P_{\rm{opt}} = \mathop{\arg}\mathop{\min}_{P} \overline{D_{\infty}^{\rm{min}}}.
\end{equation}
\end{thm}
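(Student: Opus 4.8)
The plan is to average the per-realization minimum steady-state MSD of Lemma~\ref{lem:MSD_kappa1} over the ensemble of block-sparse systems produced by the M-G model, and then read off the minimizing partition size from the resulting expression. First I would make explicit how the constants $\Delta_L$ and $\{\beta_i\}_{i=1,2,3}$ depend on a given realization ${\bf s}$. From their definitions in Appendix~\ref{append:EoC}, each is a function of just two features of ${\bf s}$: the number of active taps $Q$ (equivalently, which groups are nonzero), and a functional of the small-coefficient groups in $\mathcal{C}_{\rm S}$ that accumulates the $\|{\bf s}_{[i]}\|_2$-type quantities entering \eqref{eq:gtfunction}, which I denote $G({\bf s})$. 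Substituting these into \eqref{eq:MSD_kappa} recasts $D_\infty^{\rm min}$ as an explicit, if unwieldy, function of $Q$ and $G({\bf s})$ alone.

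The second step is to invoke the sparsity regime of Assumption~\ref{assump:smallP}, $2\ll Q\ll L$. Because optimal group zero-point attraction drives the $L-Q$ taps of the zero groups to their correct values with negligible residual variance, the negative correction term $\tfrac{\beta_3}{2}\bigl(\sqrt{\beta_1^2-\beta_2^2}-\beta_1\bigr)$ of \eqref{eq:MSD_kappa} almost exactly cancels the inactive-tap portion of the LMS floor $\mu\sigma_v^2 L/\Delta_L$ inherited from Lemma~\ref{lem:MSD_kappa}. Expanding both to leading order in the small-bias (large-$\alpha$) regime then collapses the floor from the full dimension $L$ to the active dimension $Q$, leaving only the residual small-coefficient bias---proportional to $\sqrt{G({\bf s})}/\alpha$---surviving. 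Collecting the pieces gives a single-realization expression of the schematic form $\tfrac{\mu\sigma_v^2}{\Delta_Q}\bigl(Q+\sqrt{G({\bf s})}/(\alpha\,\Delta_Q)\bigr)$, with the precise counting and constant factors deferred to the averaging step.

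The third step is the ensemble average itself. Here I would replace $Q$ by its mean $\overline Q$ from Property~\ref{prp:bordereffect}, and---since under Assumption~\ref{assump:smallP} the prefactor $1/\Delta_Q$ and the number of contributing groups concentrate about their means---pull them outside the expectation so that only $\overline{G({\bf s})}=\textrm{E}\{G({\bf s})\}$ remains to be evaluated. Computing $\overline{G({\bf s})}$ is where I expect the real work to lie: it requires integrating the small-group functional against the joint law of the M-G model, which couples the discrete Markov chain (deciding which taps, hence which groups, are nonzero, and generating the border effect that drags zero taps into active groups) with the i.i.d.\ Gaussian amplitudes (deciding whether a group's norm falls below the threshold $1/\alpha$). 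The multiplicity $L-\overline Q$, the $\sqrt{2\pi}$, and the factor $\theta(P)$ in \eqref{eq:MSD_B} should all emerge from this computation: $\theta(P)$ in particular as the Gaussian moment of the within-group normalization $s_k/\|{\bf s}_{[\lceil k/P\rceil]}\|_2$ of \eqref{eq:gtfunction}, whose validity is exactly what Assumption~\ref{assump:smalldiff} supplies. Reassembling the averaged floor and bias then reproduces \eqref{eq:MSD_B}.

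Finally, the optimal partition size is almost immediate once \eqref{eq:MSD_B} is in hand: every dependence on $P$ is funneled through the explicit, computable quantities $\overline Q(P)$, $\Delta_{\overline Q}$, $\theta(P)$, and $\overline{G({\bf s})}$, so $\overline{D_\infty^{\rm min}}$ is a concrete one-dimensional function of the integer $P$. The border effect makes the first term grow with $P$ (via $\overline Q$), while enlarging the groups improves the block-sparsity exploitation in the bias term, so the competing $P$-dependencies produce an interior minimizer; since no closed form is expected, I would simply define $P_{\rm opt}$ as its numerical argmin, exactly as in \eqref{eq:Popt}. The main obstacle throughout is the mixed discrete--continuous expectation $\overline{G({\bf s})}$, together with keeping the $Q\approx\overline Q$ decoupling self-consistent so that the averaged constants genuinely match the per-realization quantities of Lemma~\ref{lem:MSD_kappa1}.
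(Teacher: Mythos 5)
Your proposal matches the paper's own proof in essentially every step: the paper likewise starts from Lemma \ref{lem:MSD_kappa1}, uses Assumption \ref{assump:smallP} (via $\Delta_0\approx 1$, $\Delta'_0\approx 2$, and the auxiliary bound $G({\bf s})<4\alpha^2 Q/P$) to collapse \eqref{eq:MSD_kappa} to the per-realization form $D_{\infty}^{\rm min}\approx \frac{\mu\sigma_v^2}{\Delta_Q}\bigl(Q+\sqrt{2\pi(L-Q)G({\bf s})}\big/(\alpha\theta(P)\Delta_Q)\bigr)$ --- exactly your ``cancellation of the inactive-tap floor'' --- and then replaces $Q$ and $G({\bf s})$ by their M-G means ($\overline{Q}$ from Property \ref{prp:bordereffect}; $\overline{G({\bf s})}$ via the $\chi^2$/incomplete-gamma moments $F_{\alpha}(m)$ weighted by the Markov occupancy probabilities ${\rm P}\{M=m\}$), defining $P_{\rm opt}$ numerically just as in \eqref{eq:Popt}. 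The one inaccuracy is attributional rather than structural: the factors $\theta(P)$, $\sqrt{2\pi}$, and $L-Q$ do not emerge from the ensemble average of $G({\bf s})$ as you anticipate, but are already baked into the per-realization constants $\beta_1,\beta_2,\beta_3$ of Lemma \ref{lem:MSD_kappa} and surface during their simplification, so your own first step (substituting the $\beta_i$ into \eqref{eq:MSD_kappa}) would produce them before any averaging, leaving the M-G expectation to supply only $F_{\alpha}(m)$ and ${\rm P}\{M=m\}$.
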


\begin{proof}
The proof is postponed to Appendix \ref{append:proofofthmMSDB}.
\end{proof}

\begin{cor}\label{cor:MSEproportionalstepsize}
The AMS-MSD monotonically increases with respect to the step-size.
\end{cor}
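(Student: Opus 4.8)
The plan is to treat \eqref{eq:MSD_B} as an explicit scalar function of the step-size $\mu$ and show that its derivative is strictly positive on the admissible interval $(0,\mu_{\rm max})$. First I would isolate the $\mu$-dependence. Every factor in \eqref{eq:MSD_B} other than $\mu$ itself and $\Delta_{\overline Q}$ --- namely $\overline Q$, $L$, $\overline{G({\bf s})}$, $\alpha$, $\theta(P)$, $\sigma_v^2$ and $\sigma_x^2$ --- is fixed by the unknown system, the M-G parameters $\mathcal M(L,p_1,p_2,\sigma_s^2)$, and the design parameters $P$ and $\alpha$, none of which involves the step-size. In particular the penalty intensity $\kappa$ has already been optimized out, since \eqref{eq:MSD_B} is the minimized value $\overline{D_\infty^{\min}}$; by the envelope theorem the $\mu$-derivative of this minimized quantity equals the partial $\mu$-derivative of the closed form with $\kappa$ held at $\kappa_{\rm opt}$, so I may differentiate \eqref{eq:MSD_B} directly with the bracketed constants frozen.

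Next I would record the structure of $\Delta_{\overline Q}$. From its definition \eqref{def:mean_Delta_Q} it is an affine, strictly decreasing function of the step-size, which I write as $\Delta_{\overline Q}=a-b\mu$ with $a,b>0$, and it stays strictly positive throughout $(0,\mu_{\rm max})$ --- this positivity is precisely what the stability bound \eqref{eq:conditionofmu} guarantees. With this notation the AMS-MSD splits into two pieces,
\[
\overline{D_\infty^{\min}}=\underbrace{\frac{\sigma_v^2\,\overline Q\,\mu}{\Delta_{\overline Q}}}_{T_1}+\underbrace{\frac{C\,\mu}{\Delta_{\overline Q}^2}}_{T_2},\qquad C\triangleq\frac{\sigma_v^2\sqrt{2\pi(L-\overline Q)\,\overline{G({\bf s})}}}{\alpha\,\theta(P)}>0.
\]

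Then I would differentiate each piece using $\Delta_{\overline Q}'=-b$. A short computation gives $\dfrac{d}{d\mu}\!\left(\dfrac{\mu}{\Delta_{\overline Q}}\right)=\dfrac{\Delta_{\overline Q}+b\mu}{\Delta_{\overline Q}^2}=\dfrac{a}{\Delta_{\overline Q}^2}$ and $\dfrac{d}{d\mu}\!\left(\dfrac{\mu}{\Delta_{\overline Q}^2}\right)=\dfrac{\Delta_{\overline Q}+2b\mu}{\Delta_{\overline Q}^3}=\dfrac{a+b\mu}{\Delta_{\overline Q}^3}$. Since $a,b>0$, $\mu>0$ and $\Delta_{\overline Q}>0$ on the admissible interval, both derivatives are strictly positive; multiplying by the positive constants $\sigma_v^2\overline Q$ and $C$ and summing yields $\dfrac{d}{d\mu}\overline{D_\infty^{\min}}>0$, which is exactly the claimed monotonicity.

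The derivative calculation itself is elementary, so the only real care lies in the bookkeeping of the first step: one must be certain that none of the ensemble-averaged constants $\overline Q$, $\overline{G({\bf s})}$, $\theta(P)$ secretly depends on $\mu$ (they do not, as each is assembled solely from $P$, $\alpha$, and the M-G statistics), and that the optimization over $\kappa$ does not reintroduce such a dependence (handled by the envelope argument above). I expect this verification of $\mu$-independence, rather than the differentiation, to be the main --- and essentially the only --- obstacle.
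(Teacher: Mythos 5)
Your proof is correct and follows essentially the same route as the paper, which likewise argues directly from the closed form \eqref{eq:MSD_B} together with the fact that $\Delta_{\overline Q}$ is positive and monotonically decreasing in $\mu$ while $\overline Q$, $\overline{G({\bf s})}$, and $\theta(P)$ are $\mu$-independent. You merely make the paper's one-line observation explicit by carrying out the differentiation of $\mu/\Delta_{\overline Q}$ and $\mu/\Delta_{\overline Q}^2$ (the envelope-theorem remark is harmless but unnecessary, since $\kappa$ is already eliminated from \eqref{eq:MSD_B}).
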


Corollary \ref{cor:MSEproportionalstepsize} is coincident with the intuition and the theory on $l_0$-LMS. This can be readily seen from \eqref{eq:MSD_B} because $\Delta_{\overline Q}$ monotonically decreases with respect to $\mu$.

\begin{rmk}\label{rmk:comparesteadysate}
As $l_0$-LMS is a special case of BS-LMS when $P$ equals $1$, we expect that the AMS-MSD of BS-LMS with $P_{\rm{opt}}$ is no larger than that of $l_0$-LMS when all the other parameters are same. Considering the high complexity of the closed form of $P_{\textrm{opt}}$, it is not derived here for the sake of simplicity.
\end{rmk}

\subsection{Superior Performance of BS-LMS}

Based on Lemma \ref{lem:MSD_instant} and Theorem \ref{thm:MSD_B}, we could demonstrate that the averaged transient behavior of BS-LMS with an appropriate group partition size is better than that of $l_0$-LMS.

\begin{thm}
\label{thm:meanMSD_instant}
For given block-sparse systems generated by the proposed M-G model, the closed form of the AMT-MSD of BS-LMS is
\begin{equation}
\label{eq:meanMSD_instant_closeform}
\overline{D_n^{\rm min}} = c'_1 (\lambda'_1)^n + c'_2 (\lambda'_2)^n + c'_3 (\lambda'_3)^n + \overline{D_{\infty}^{\rm min}},
\end{equation}
where
\begin{align}
\label{eq:lambda_1}
\lambda'_1  \triangleq & 1 - 2\mu \sigma_x^2, \\
\label{eq:lambda_2}
\lambda'_2 \triangleq & 1 - 2\mu \sigma_x^2 \alpha \theta(P)\sqrt{\frac{2 \left( L - \overline{Q} \right) }{\pi \overline{G({\bf s})}}}, \\
\label{eq:lambda_3}
\lambda'_3  \triangleq & 1 - \mu \sigma_x^2,
\end{align}
and the expressions of $\left\{c'_i\right\}_{i=1,2,3}$ share the same forms with $\left\{c_i\right\}_{i=1,2,3}$ in Lemma \ref{lem:MSD_instant} except for that $Q, G({\bf s}), \| s\|_2^2$, and $G'({\bf s})$are replaced by their means defined by \eqref{def:Q_mean}, \eqref{def:Gamma1_mean}, \eqref{def:norms_mean}, and \eqref{def:Gamma2_mean}, respectively.
\end{thm}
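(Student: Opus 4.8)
The plan is to obtain the averaged transient MSD by taking the expectation of the instantaneous MSD expression in Lemma~\ref{lem:MSD_instant} over the ensemble of block-sparse systems generated by the M-G model, under Assumptions~\ref{assump:smallP}) and \ref{assump:markovmodel}). Since Lemma~\ref{lem:MSD_instant} already gives the exact closed form $D_n = c_1\lambda_1^n + c_2\lambda_2^n + c_3\lambda_3^n + D_\infty$ for a fixed system, the core of the argument is to show that this three-mode structure is preserved under averaging, with the system-dependent constants $Q$, $G(\mathbf{s})$, $\|\mathbf{s}\|_2^2$, and $G'(\mathbf{s})$ replaced by their M-G means $\overline{Q}$, $\overline{G(\mathbf{s})}$, $\overline{\|\mathbf{s}\|_2^2}$, and $\overline{G'(\mathbf{s})}$. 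Thus the target reduces to three claims: that $\overline{D_\infty^{\min}}$ is the stated steady-state limit (already established by Theorem~\ref{thm:MSD_B}); that the averaged eigenvalues simplify to the explicit $\lambda'_1,\lambda'_2,\lambda'_3$ in \eqref{eq:lambda_1}--\eqref{eq:lambda_3}; and that the averaged coefficients $c'_i$ inherit the forms of $c_i$ with the substitutions indicated.

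First I would analyze the eigenvalues. The constants $\lambda_1,\lambda_2$ are the eigenvalues of the matrix $\mathbf{A}$ from \eqref{def:A}, and $\lambda_3$ is given explicitly in the appendix. Under Assumption~\ref{assump:smallP}) one has $2\ll Q\ll L$, so in the white-Gaussian-input regime the matrix $\mathbf{A}$ governing the mean-square recursion becomes (to leading order) nearly diagonal, decoupling the two dominant relaxation modes. The first mode, driven by the bulk of the $L$ taps relaxing at rate $2\mu\sigma_x^2$, yields $\lambda'_1 = 1 - 2\mu\sigma_x^2$; the slow mode, driven by the zero-attraction term acting on the roughly $L-\overline{Q}$ zero taps, yields $\lambda'_2 = 1 - 2\mu\sigma_x^2\,\alpha\,\theta(P)\sqrt{2(L-\overline{Q})/(\pi\,\overline{G(\mathbf{s})})}$; and the cross term relaxes at $\lambda'_3 = 1-\mu\sigma_x^2$. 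Here I would substitute the M-G averages for the system-dependent quantities inside $\lambda_2$ and confirm that the factor $\theta(P)\sqrt{2(L-\overline{Q})/(\pi\,\overline{G(\mathbf{s})})}$ is exactly the averaged attraction strength arising in the proof of Theorem~\ref{thm:MSD_B}, thereby tying the transient slow-mode rate to the same constant that controls the steady-state bias.

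Next I would handle the coefficients. Since $c_1,c_2$ are fixed by the initial condition $\mathbf{w}_0=\mathbf{0}$ (hence $\mathbf{h}_0=-\mathbf{s}$) via the matched reference Lemma~1 in \cite{Su2}, and $c_3$ is given by \eqref{def:c_3}, the averaging amounts to propagating the M-G expectations through these algebraic expressions. Because the transient analysis is a direct specialization of the fixed-system result, and Assumption~\ref{assump:smallP}) permits the leading-order decoupling that makes the eigenstructure separable, each $c'_i$ is obtained by the formal replacement $Q\mapsto\overline{Q}$, $G(\mathbf{s})\mapsto\overline{G(\mathbf{s})}$, $\|\mathbf{s}\|_2^2\mapsto\overline{\|\mathbf{s}\|_2^2}$, $G'(\mathbf{s})\mapsto\overline{G'(\mathbf{s})}$; I would verify this substitution commutes with the linear operations defining $c_i$ and produces a self-consistent decomposition summing to the correct $\overline{D_\infty^{\min}}$ as $n\to\infty$.

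The main obstacle I anticipate is \emph{justifying the interchange of expectation and the nonlinear transient dynamics}. Lemma~\ref{lem:MSD_instant} holds for a deterministic $\mathbf{s}$, but $D_n$ depends on $\mathbf{s}$ through the nonlinear zero-attraction map $\mathbf{g}(\cdot)$ and through the classification into $\mathcal{C}_{\rm L},\mathcal{C}_{\rm S},\mathcal{C}_0$, so $\overline{D_n}\neq D_n(\overline{\mathbf{s}})$ in general. The rigorous step is to argue that, under Assumptions~\ref{assump:smalldiff}), \ref{assump:smallP}), and \ref{assump:markovmodel}), the relevant system functionals concentrate tightly enough around their M-G means that the averaged dynamics are well-approximated by the dynamics with averaged constants. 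This is precisely where the small-$P$, sparse regime $2\ll\overline{Q}\ll L$ is essential: it both suppresses the variance of the empirical sums defining $G(\mathbf{s})$ and $\|\mathbf{s}\|_2^2$ (invoking a law-of-large-numbers/Gaussian concentration over the $O(\overline{Q})$ active taps and $O(L)$ zero taps, using Proposition~\ref{prp:bordereffect}) and guarantees the eigenstructure of $\mathbf{A}$ separates cleanly into the three modes. I would therefore present this step as an approximation made rigorous in the same spirit as Theorem~\ref{thm:MSD_B}, with the detailed concentration and substitution calculations deferred to the appendix.
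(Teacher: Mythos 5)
Your high-level plan does match the opening moves of the paper's proof: start from Lemma \ref{lem:MSD_instant}, invoke Assumption \ref{assump:smallP}) to set $\Delta_0\approx 1$ and $\Delta_Q\approx 2$, extract $\lambda'_1,\lambda'_2$ from the matrix ${\bf A}$ in \eqref{def:A}, and then obtain the average result by formally replacing $Q$, $G({\bf s})$, $\|{\bf s}\|_2^2$, and $G'({\bf s})$ with their M-G means. On that last step the paper is in fact \emph{less} ambitious than you are: it performs the substitution of expectations without any concentration argument, so your law-of-large-numbers framing is a reasonable (if deferred) gloss on the same maneuver, not a divergence.

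The genuine gap is the derivation of the explicit rate \eqref{eq:lambda_2}, which is the entire technical content of the paper's proof. Factoring the simplified characteristic quadratic \eqref{eq:quadraticequation} gives $\lambda'_1 = 1-2\mu\sigma_x^2$ and $\lambda'_2 = 1-\mu C(\mu,\omega)$, where $C(\mu,\omega) = \sigma_x^2\Delta_L + \sqrt{8/\pi}\,\kappa\alpha\theta(P)/(\mu\omega)$ as in \eqref{eq:defC} and $\omega$ solves the quadratic \eqref{eq:origomegaequation}; so the second eigenvalue still depends on $\kappa$ and $\omega$, and your picture of ${\bf A}$ becoming ``nearly diagonal'' with a slow mode whose rate is simply read off is not the mechanism. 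The expression \eqref{eq:lambda_2} emerges only at the \emph{optimal} $\kappa$ of \eqref{eq:kappa_optimal} (the theorem concerns $\overline{D_n^{\rm min}}$, minimized over $\kappa$): one must insert $\kappa_{\rm opt}$ into the $\omega$-equation \eqref{eq:omegaequation}, evaluate $\kappa_{\rm opt}/(\mu\omega)$ via the $t_1,\dots,t_4$ computation of \eqref{eq:kappaomegamu}--\eqref{def:t_4}, and justify discarding $t_3$ and the approximation $t_4\gg 1$ through the auxiliary estimate $G({\bf s})\ll L\alpha^2/P$ in \eqref{eq:rangeofGs}, which itself is argued from the rapid decay of $F_\alpha(m)$ and the smallness of ${\rm P}\{M=m\}$. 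None of this appears in your proposal. You assert that the slow-mode factor $\theta(P)\sqrt{2(L-\overline{Q})/(\pi\overline{G({\bf s})})}$ is ``the averaged attraction strength arising in the proof of Theorem \ref{thm:MSD_B},'' but no such constant is produced there, and nothing in your argument explains why $\kappa_{\rm opt}$ and $\omega$ cancel out of $\lambda'_2$ to leave that explicit form. As written, the key equation of the theorem is stated rather than proved.
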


\begin{proof}
The proof is postponed to Appendix \ref{append:proofoflemmaMSDinstant}.
\end{proof}

The difference of Theorem \ref{thm:meanMSD_instant} from Lemma \ref{lem:MSD_instant} is that the former provides an averaged minimum behavior of the best $\kappa$ with respect to a given unknown system generation model. As a consequence, the close form of $\{\lambda'_i\}$ are provided to reveal the detail of convergence in BS-LMS.

In order to compare the convergence behavior of BS-LMS and $l_0$-LMS fairly, it is assumed that the final steady-state MSDs of both algorithms are equal. According to Corollary \ref{cor:MSEproportionalstepsize} and Remark \ref{rmk:comparesteadysate}, we know that the step-size in BS-LMS is larger than that in $l_0$-LMS when the two algorithms demonstrate the same steady-state performance. Then we have the following corollary.

\begin{cor}
\label{cor:MSD_smalllambda}
For a given M-G model $\mathcal{M}(L, p_1, p_2, \sigma_s^2)$ satisfying
\begin{equation}
\label{eq:res_p12}
\frac{1 - p_1}{(1 - p_2)^2} \geq \frac{1}{3(1 - {\rm e}^{-1})},
\end{equation}
$\left\{\lambda'_i\right\}_{i=1,2,3}$ in \eqref{eq:meanMSD_instant_closeform} with $P_{\textrm{opt}}$ are smaller than, respectively, those of $l_0$-LMS, which means that BS-LMS with $P_{\textrm{opt}}$ always converges more quickly than $l_0$-LMS.
\end{cor}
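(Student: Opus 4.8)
The plan is to compare the three modal factors $\lambda'_1,\lambda'_2,\lambda'_3$ of BS-LMS evaluated at $P_{\rm opt}$ against the corresponding factors of $l_0$-LMS (which is BS-LMS at $P=1$), under the standing constraint that both algorithms are tuned to the same steady-state MSD. The first task is to pin down the relation between the two step-sizes. Following the discussion preceding the corollary, I would combine Remark \ref{rmk:comparesteadysate} (at a common step-size, $\overline{D_{\infty}^{\rm min}}$ at $P_{\rm opt}$ is no larger than at $P=1$) with Corollary \ref{cor:MSEproportionalstepsize} (monotone increase of $\overline{D_{\infty}^{\rm min}}$ in $\mu$) to conclude that equalizing the steady-state MSDs forces $\mu_{\rm BS}\geq\mu_{l_0}$; I would make this precise by a continuity and monotonicity argument on the map $\mu\mapsto\overline{D_{\infty}^{\rm min}}(P,\mu)$, with strictness whenever the MSD gap at a fixed $\mu$ is strict.

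Granting $\mu_{\rm BS}\geq\mu_{l_0}$, the factors $\lambda'_1=1-2\mu\sigma_x^2$ and $\lambda'_3=1-\mu\sigma_x^2$ in \eqref{eq:lambda_1} and \eqref{eq:lambda_3} are strictly decreasing in $\mu$ and free of $P$, so both are immediately smaller for BS-LMS. The whole difficulty concentrates in $\lambda'_2$ of \eqref{eq:lambda_2}: writing its attraction strength as $\Phi(P)\triangleq\theta(P)\sqrt{(L-\overline{Q})/\overline{G({\bf s})}}$, the required inequality $\lambda'_2(P_{\rm opt})<\lambda'_2(1)$ is equivalent to $\mu_{\rm BS}\,\Phi(P_{\rm opt})>\mu_{l_0}\,\Phi(1)$. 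Since $\mu_{\rm BS}\geq\mu_{l_0}>0$ and $\Phi$ is positive, it suffices to show $\Phi(P_{\rm opt})\geq\Phi(1)$.

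To establish $\Phi(P_{\rm opt})\geq\Phi(1)$ I would exploit the explicit $P$-dependence of its three ingredients: $\overline{Q}(P)$ from \eqref{def:Q_mean}, together with the appendix definitions of $\theta(P)$ and $\overline{G({\bf s})}$, all re-expressed through the model parameters via Property \ref{prp:sparsity} (so that $\overline{S}=(1-p_1)/(2-p_1-p_2)$ and $\overline{B_{\rm nz}}=1/(1-p_2)$). At $P=1$ the border effect vanishes ($\overline{Q}(1)=L\overline{S}$) and $\Phi(1)$ reduces to the $l_0$-LMS value, giving a clean baseline. I would then compare $\Phi(P)$ with $\Phi(1)$ for the small $P$ permitted by Assumption \ref{assump:smallP} and show the ratio exceeds one exactly when the block structure is pronounced enough. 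The threshold \eqref{eq:res_p12}, namely $\frac{1-p_1}{(1-p_2)^2}\geq\frac{1}{3(1-{\rm e}^{-1})}$, should be precisely the condition that makes $\Phi$ non-decreasing as $P$ moves away from $1$, so that the MSD-optimal $P_{\rm opt}$ inherits $\Phi(P_{\rm opt})\geq\Phi(1)$. I expect the factor $1-{\rm e}^{-1}$ to arise from the saturating penalty $1-\exp(-\alpha\|\cdot\|_2)$ evaluated near the small-coefficient threshold $1/\alpha$, and the constant $3$ from the first-order expansion of $\overline{Q}(P)$ and $\overline{G({\bf s})}$ in $(P-1)$.

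The main obstacle, where I would spend most of the effort, is this last step: controlling $\Phi(P)$ as a function of $P$. Unlike $\lambda'_1$ and $\lambda'_3$, the factor $\lambda'_2$ mixes the step-size gain with the $P$-dependent attraction, and $P_{\rm opt}$ enters only implicitly through the argmin in \eqref{eq:Popt}. A guiding structural observation is that, up to the common factor $\Delta_{\overline{Q}}$, the steady-state penalty term in Theorem \ref{thm:MSD_B} is essentially $(L-\overline{Q})/\Phi(P)$, i.e.\ the numerator $\sqrt{(L-\overline{Q})\overline{G({\bf s})}}/\theta(P)$ is the reciprocal of $\Phi$ times $(L-\overline{Q})$; hence driving the AMS-MSD down at $P_{\rm opt}$ tends to push $\Phi$ up, which is the qualitative reason the corollary holds. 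Turning this heuristic into the sharp bound \eqref{eq:res_p12} requires a careful monotonicity analysis of $\theta(P)$, $\overline{Q}(P)$ and $\overline{G({\bf s})}$, which is the technical heart of the argument and the step I expect to be hardest.
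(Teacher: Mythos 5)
Your reduction is the same as the paper's: equal steady-state MSDs plus Corollary \ref{cor:MSEproportionalstepsize} and Remark \ref{rmk:comparesteadysate} give $\mu_{\rm BS}\geq\mu_{l_0}$, which settles $\lambda'_1$ and $\lambda'_3$ immediately, and the whole problem collapses to showing $\Phi^2(P_{\rm opt})\triangleq(L-\overline{Q})\,\theta^2(P_{\rm opt})/\overline{G({\bf s})}\geq\Phi^2(1)$, exactly the quantity the paper isolates from \eqref{eq:lambda_2}. But the step you defer as ``the technical heart'' is the actual proof, and the mechanism you conjecture for it would fail. You propose that \eqref{eq:res_p12} is ``precisely the condition that makes $\Phi$ non-decreasing as $P$ moves away from $1$,'' so that $P_{\rm opt}$ inherits the bound. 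That cannot work: using \eqref{def:Q_mean} one has $L-\overline{Q}=L(1-\overline{S})p_1^{P-1}\to 0$ geometrically, while $\theta^2(P)P$ is bounded (it tends to $\pi/2$ by Stirling), so after the paper's dominant-term reduction $\Phi^2(P)\propto\theta^2(P)P\cdot f(P)$ with $f(P)\to 0$ as $P\to\infty$, whereas $f(1)>0$. Thus $\Phi$ rises and then falls --- consistent with the border-effect behavior in Fig.~\ref{fig:Bopt_p1_p2} --- and no condition on $(p_1,p_2)$ makes it monotone. The paper instead proves the \emph{pointwise} inequality $f(P)\geq f(1)$ over the admissible range where Assumption \ref{assump:smallP}) keeps $\overline{Q(P)}/L\ll 1$, via two prior moves you do not have: (i) the dominant-term approximation $\overline{G({\bf s})}\approx(L/P)\,{\rm P}\{M=1\}F_{\alpha}(1)$ from the rapid decay of $F_{\alpha}(m)$ in \eqref{def:Falphan} and \eqref{def:Pxk}, which factors $\Phi^2$ into $\theta^2(P)P$ times $f(P)$; and (ii) a double-factorial computation showing $\theta^2(P)P$ is increasing with $\theta^2(2)\cdot2=\pi^2/8>1$.

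Your guesses for the two constants also point at the wrong sources, which suggests the plan would not land on the stated threshold. In the paper, $f(P_{\rm opt})\geq f(1)$ is handled by a case split at $P=p_1/(p_1-p_2)$: for $2\leq P\leq p_1/(p_1-p_2)$ one checks the endpoints using concavity of $L(P)=\frac{1-p_1}{1-p_2}\bigl(1-(p_2/p_1)^{P}\bigr)$ and convexity of $R(P)$; for $P>p_1/(p_1-p_2)$ the chain is $L(P_{\rm opt})>\frac{1-p_1}{1-p_2}\bigl(1-{\rm e}^{-1}\bigr)\geq\frac{1}{3}(1-p_2)>R(P_{\rm opt})$. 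So $1-{\rm e}^{-1}$ comes from $(p_2/p_1)^{p_1/(p_1-p_2)}\leq{\rm e}^{-1}$ (the elementary bound $(1-x)^{1/x}\leq{\rm e}^{-1}$), not from the penalty $1-\exp(-\alpha\|\cdot\|_2)$ near the threshold $1/\alpha$; and the $3$ comes from the margin $\frac{\overline{Q}/L}{1-\overline{Q}/L}<\frac{1}{3}$ supplied by $\overline{Q(P_{\rm opt})}/L\ll 1$, not from a first-order expansion in $(P-1)$. The middle inequality $\frac{1-p_1}{1-p_2}(1-{\rm e}^{-1})\geq\frac{1}{3}(1-p_2)$ is exactly where the hypothesis \eqref{eq:res_p12} enters, and only there. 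In short: your skeleton is the paper's, but the load-bearing step is missing, and the monotonicity route you sketch for it is provably false; you would need to replace it with a pointwise comparison on the admissible $P$-range along the lines above.
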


\begin{proof}
The proof is postponed to Appendix \ref{append:proofofcorollaryMSD_smalllambda}.
\end{proof}

Though we further restrict the selection of $p_1$ and $p_2$ to facilitate the proof, it is found that \eqref{eq:res_p12} is not necessary. In fact, Corollary \ref{cor:MSD_smalllambda} is usually valid even when \eqref{eq:res_p12} is violated, as shown in numerical results.

\section{Numerical Simulations}
\label{sec:numericalsimulations}

Five experiments are designed to verify the contents in this paper, where the first two demonstrate the performance of BS-LMS and the M-G model, and the last three test the theoretical analysis. The reference algorithms include STNLMS \cite{Gu}, SELQUE \cite{Sugiyama2}, M-SELQUE \cite{Sugiyama3}, and $l_0$-LMS \cite{Gu2}. In all the experiments, the unknown systems are of length $L=800$. For those generated with the proposed M-G model, $\sigma_s^2$ is set as $1$. The input signal and measurement noise are independent zero mean Gaussian series. For both BS-LMS and $l_0$-LMS, $\alpha$ is chosen as $1$. For BS-LMS, $\delta = 1{\rm e} \textnormal{-}8$. The Signal-to-Noise Ratio (SNR) is 40dB and 20dB in the fifth experiment, while it is 40dB in the others. Simulation results are averaged by 10 independent trials for each unknown system. To get the average MSD, 100 unknown systems are generated and identified, and then the MSDs of these systems are averaged.

\subsection{On the Performance of BS-LMS and M-G Model}

In the first experiment, the proposed algorithm is tested and compared with the references by identifying two block-sparse systems which have the same sparsity. The impulse responses of these systems are displayed in Fig. \ref{fig:traditionalchannels}, where the first has a single cluster of nonzero coefficients at $[405,451]$ and the second has two clusters at $[405, 429]$ and $[569, 590]$, respectively. The simulation results are plotted in Fig. \ref{fig:MSD_traditionalchannels}. For the proposed algorithm, the parameters are set as $\mu=1/L$, $\kappa=1.55{\rm e} \textnormal{-}6$ and $P=5$. For all reference algorithms, the step-sizes are selected to make their steady-state MSDs equal to that of BS-LMS, and other parameters are elaborately tuned to produce their fastest convergence.

\begin{figure}[t]
\begin{center}
\includegraphics[width=\figurewidth]{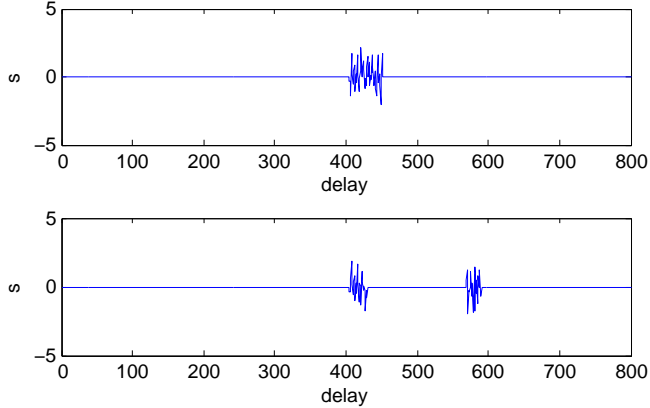}
\caption{The block-sparse systems tested in the first experiment.}
\label{fig:traditionalchannels}
\end{center}
\end{figure}

\begin{figure}[t]
\begin{center}
\includegraphics[width=\figurewidth]{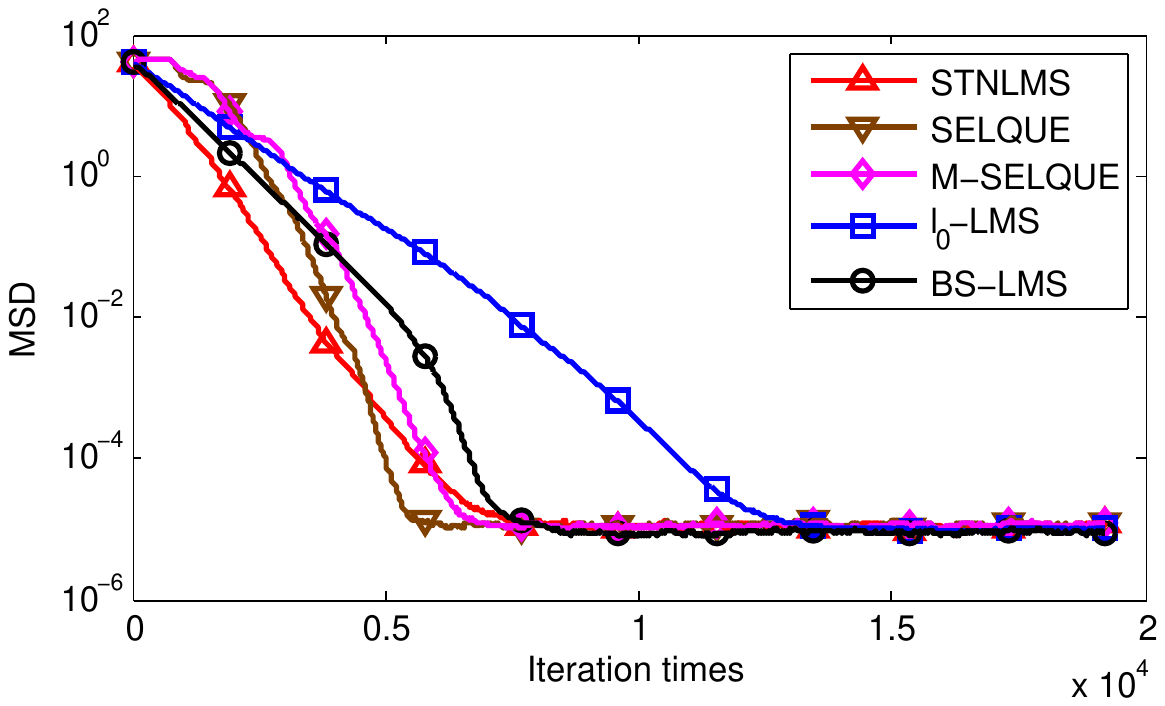}
\includegraphics[width=\figurewidth]{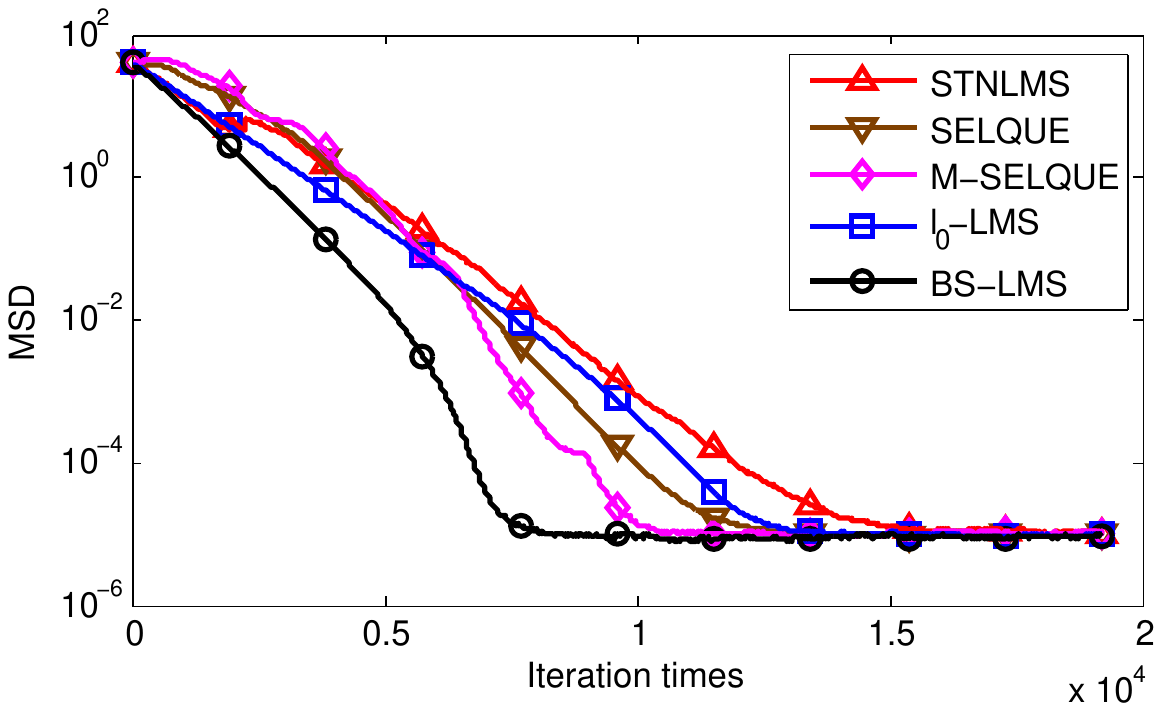}
\caption{The learning curves of the proposed algorithm and the references when identifying the two block-sparse systems displayed in Fig. \ref{fig:traditionalchannels}, where (top) and (bottom) correspond to the single-cluster and multi-cluster system respectively. }
\label{fig:MSD_traditionalchannels}
\end{center}
\end{figure}

According to Fig. \ref{fig:MSD_traditionalchannels}, when there is only one cluster in the system response, the convergence performance of BS-LMS is still lightly inferior to other block algorithms. SELQUE converges the fastest among all the algorithms. Nonetheless, when there exists two clusters, BS-LMS shows its advantage. Because BS-LMS need not detect the active regions, its performance is nearly not affected by the number of clusters. The convergence rates of STNLMS and SELQUE deteriorate significantly, because all of the active regions, along with flat delays between the two clusters, are considered as a long active region. Although M-SELQUE can obviate this problem, its convergence behavior still becomes worse when the unknown system has multi-clusters.

In the second experiment, BS-LMS and the proposed M-G model are tested. The simulation results are plotted in Fig. \ref{fig:MSD_instant_block}, where (a), (b), and (c) correspond to the unknown systems plotted in the diagonal subfigures of Fig. \ref{fig:coeff_markov}, respectively,  from left-top to right-bottom. For BS-LMS, the parameters $\mu$, $\kappa$, and $P$ are set as $0.6/L$, $3.90{\rm e} \textnormal{-}7$, and $3$ for (a), $1/L$, $1.07{\rm e}\textnormal{-}6$, and $4$ for (b), $1/L$, $1.60{\rm e}\textnormal{-}6$, and $5$ for (c). For the reference algorithms, the step-size and other parameters are properly adjusted to get their fastest convergence rate and equal steady-state MSD with BS-LMS.

According to the simulation results, BS-LMS and $l_0$-LMS are always among the best when identifying various block-sparse systems, which are generated by the proposed M-G model with various parameter sets. BS-LMS converges faster than $l_0$-LMS, because the former utilizes the block-sparsity prior. Among all the algorithms based on active region detection, M-SELQUE behaves the best but still converges slower than $l_0$-LMS, because it takes more iterations to identify the locations of nonzero coefficients and thus reduces the convergence rate when there are more and dispersed clusters, which is highly likely to be produced by utilizing the M-G model. SELQUE and STNLMS get the worst performance because they are not suitable to the multi-cluster system, which has been demonstrated in the first experiment. Above all, we can conclude that BS-LMS has a superior robustness than all reference algorithms, especially in the scenarios of multiple-scattered-cluster sparse systems.

\subsection{On the Theoretical Results}

In the third experiment, the steady-state performance of BS-LMS of different group partition size $P$ with respect to $\kappa$ is tested. The unknown system response is shown in the center of Fig. \ref{fig:coeff_markov}. $P$ is chosen as 1, 5, 10, and 20, respectively. For each $P$, $\kappa$ varies from $10^{-9}$ to $10^{-5}$, and $\mu=0.8/L$.

\begin{figure}[t]
\begin{center}
\includegraphics[width=\figurewidth]{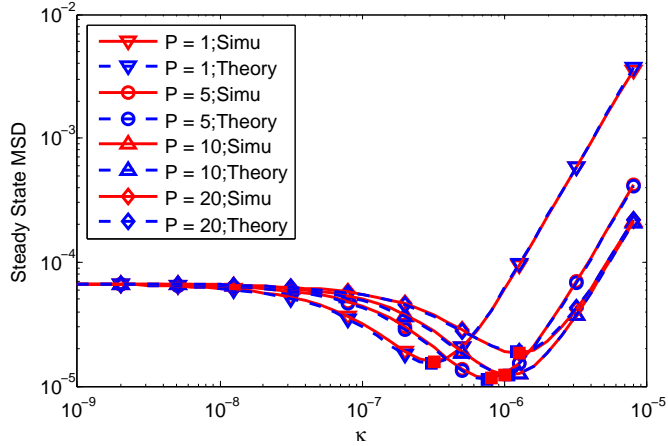}
\caption{Steady-state MSD of BS-LMS of different group partition size with respect to $\kappa$. The solid square denotes the theoretical $\kappa_{\textrm{opt}}$.}
\label{fig:MSD_steady_kappa}
\end{center}
\end{figure}

Referring to Fig. \ref{fig:MSD_steady_kappa}, we can see that the theoretical steady-state MSD of BS-LMS agrees well with the simulation result. For every group partition size, as $\kappa$ increases from $10^{-9}$, the steady-state MSD decreases at first, which means that proper GZA is useful to reduce the amplitudes of coefficients in $\mathcal{C}_0$. However, when $\kappa$ continues to increase, more intense GZA enhances the bias of coefficients in $\mathcal{C}_{\rm S}$. For different group partition size, the minimum steady-state MSD and its corresponding optimal $\kappa$ varies. One may recognize that the simulation result of the optimal $\kappa$ tallies with theoretical $\kappa_{\textrm{opt}}$ very well.

In the fourth experiment, for given M-G model, the effect of group partition size on the average steady-state MSD is investigated. The model parameter set $(p_1, p_2)$ is chosen as $(0.98, 0.82), (0.99, 0.91)$, and $(0.995, 0.955)$, respectively. Three typical system responses are plotted in the diagonal subfigures of Fig. \ref{fig:coeff_markov}. $P$ varies from 1 to 50. For each $P$, $\kappa$ is chosen as the theoretical optimal, and $\mu=0.4/L$.

\begin{figure}[t]
\begin{center}
\includegraphics[width=\figurewidth]{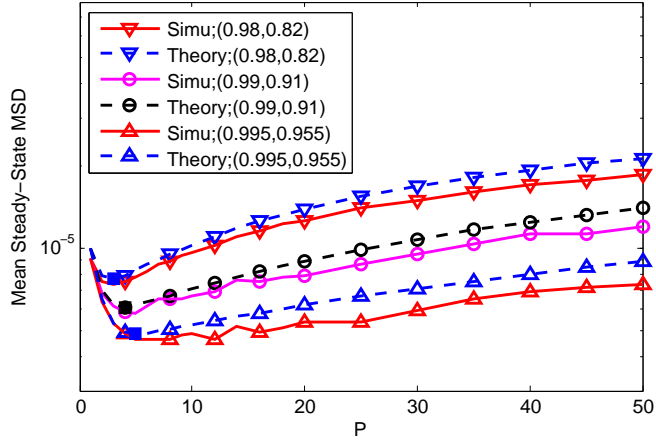}
\caption{Average steady-state MSD of BS-LMS with respect to different group partition size for given M-G model. $\kappa$ is chosen as the theoretical optimal and the solid square denotes $P_{\textrm{opt}}$.}
\label{fig:Bopt_p1_p2}
\end{center}
\end{figure}

Please refer to Fig. \ref{fig:Bopt_p1_p2} for the result. According to Fig. \ref{fig:coeff_markov}, with the growth of $p_1$ and $p_2$, the average block size of nonzero coefficients increase from $5.6$, $11.1$, to $22.2$.
As a consequence, the optimal group partition size $P_{\rm{opt}}$ also increases from $3$, $4$, to $5$. One may find that for all block-sparse systems, when $P$ initially increases from $1$, the minimum MSD decreases quickly at first, which means that treating nonzero coefficients in groups really improves the identification of block-sparse system. However, the minimum MSD increases after $P$ exceeds its optimum, which is much smaller than the average block size, and becomes larger than that of $P=1$, which demonstrates the severe consequence of border effect. The above results accord well with the intuition. Furthermore, simulation results tally with analytical values, especially when $P$ is small.

In the last experiment, for given M-G model, the theoretical transient behavior of BS-LMS with the optimal group partition size is verified by simulation and compared with that of $l_0$-LMS. The model parameter set $(p_1, p_2)$ is chosen as $(0.99, 0.91)$. A typical unknown system response is shown in the center of Fig. \ref{fig:coeff_markov}. The SNR are selected as 40dB and 20dB to test the performance in various noisy scenarios. For BS-LMS and $l_0$-LMS, $\mu$ is chosen as $0.637/L$ and $0.4/L$, respectively, to make their average steady-state MSDs equal. For both algorithms, $P$ and $\kappa$ are chosen as their corresponding optimal values.

\begin{figure}[t]
\begin{center}
\includegraphics[width=\figurewidth]{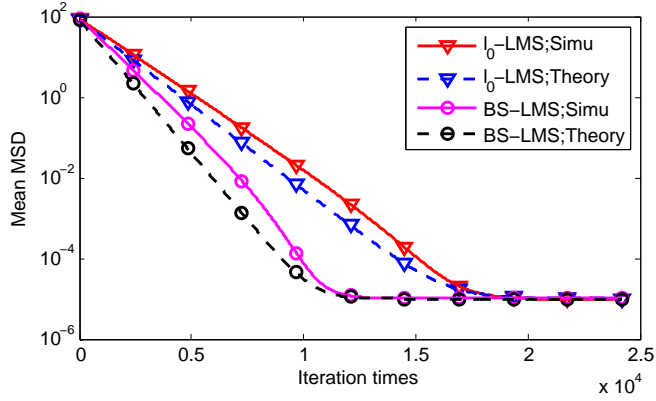}
\caption{Average learning curves of BS-LMS and $l_0$-LMS with given M-G model ($p_1=0.99$ and $p_2=0.91$). $P$ and $\kappa$ are chosen as the optimal. The SNR is 40dB. }
\label{fig:MSD_instant_both}
\end{center}
\end{figure}

\begin{figure}[t]
\begin{center}
\includegraphics[width=\figurewidth]{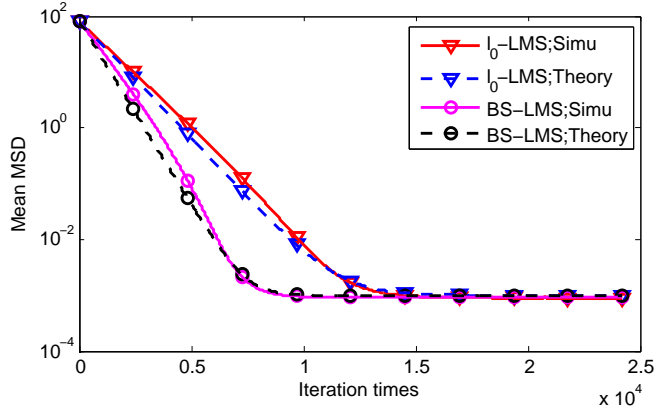}
\caption{Average learning curves of BS-LMS and $l_0$-LMS with given M-G model ($p_1=0.99$ and $p_2=0.91$). $P$ and $\kappa$ are chosen as the optimal. The SNR is 20dB. }
\label{fig:MSD_instant_both_smallSNR}
\end{center}
\end{figure}

Please refer to Fig. \ref{fig:MSD_instant_both} and Fig. \ref{fig:MSD_instant_both_smallSNR}. One may readily see that the convergence rate of BS-LMS is always faster than that of $l_0$-LMS. Furthermore, the theoretical analysis of transient behavior accords with simulation in a tolerable error, which origins mainly from large step-size and the independence assumption.

\section{Conclusion}
\label{sec:conclusion}

In order to improve the performance of block-sparse system identification, a new algorithm based on $l_0$-LMS is proposed in this paper by changing $l_0$ norm to mixed $l_{2, 0}$ norm with equal group partition sizes in the cost function. Also, a M-G model is put forward to describe the block-sparse system. Furthermore, the theoretical analysis on performance of BS-LMS compared to $l_0$-LMS is presented based on the expressions of mean square misalignment, which shows that BS-LMS is better than $l_0$-LMS theoretically. Finally, simulations are designed to verify the theoretical results and confirm superior performance of our proposed algorithm.

\section{Appendix}

\subsection{Expressions of Constants}
\label{append:EoC}

All through this paper, we have
\begin{align}
&\Delta_L \triangleq 2-(L+2)\mu\sigma_x^2, & &\Delta_Q \triangleq 2-(Q+2)\mu\sigma_x^2,\label{def:DeltaL}\\
&\Delta_0 \triangleq 1-\mu\sigma_x^2, & &\Delta_0' \triangleq 2-\mu\sigma_x^2,\label{def:Delta0}
\end{align}
\begin{align}
G({\bf s}) \triangleq & \langle {\bf g}({\bf s}), {\bf g}({\bf s}) \rangle = \sum_{k \in {\mathcal C}_{\rm S}} g_k^2({\bf s}), \label{def:Gammaprime}\\
G'({\bf s}) \triangleq & \langle {\bf s}, {\bf g}({\bf s}) \rangle = \sum_{k \in {\mathcal C}_{\rm S}} s_k g_k ({\bf s}).
\end{align}

In Lemma \ref{lem:MSD_kappa}, the constants $\left\{ \beta_i \right\}$ are
\begin{align}
\!\!\beta_0 \!\! \triangleq & \mu \sigma_x^2 \Delta_0 ' \Delta_L G({\bf s}) \!\! + \!\! 4\alpha^2 \Delta_Q \!\! \left(\frac{\mu \sigma_x^2 \Delta_L}{P} \!\! + \!\! \frac{\Delta_0 \Delta_Q \theta^2(P)}{\pi}\right)\!\!, \label{def:beta_0}\\
\!\!\beta_1 \!\! \triangleq & \frac{ \Delta_0 ' G({\bf s}) \! + \! 4(L \! - \! Q) \alpha^2 \!\! \left(\frac{\mu \sigma_x^2}{P} \! + \! \frac{2\Delta_0 \Delta_Q \theta^2(P)}{\pi\Delta_L}\right )}
{\mu^2 \sigma_x^4 \Delta_L}, \label{def:beta_1}\\
\!\!\beta_2 \!\! \triangleq &
\frac{4 \alpha \left(L - Q \right) \theta(P)}{\mu^2 \sigma_x^4 \Delta_L^2}\sqrt{\frac{\Delta_0 \beta_0}{\pi}}, \label{def:beta_2}\\
\!\!\beta_3 \!\! \triangleq & {2 \mu^3 \sigma_x^4 \sigma_v^2 \Delta_0 \Delta_L}/{\beta_0},\label{def:beta_3}
\end{align}
where $\theta(P)$ is defined as
\begin{equation}
\label{eq:thetaP}
\theta(P) \triangleq
\begin{cases}
\myfrac{\left[ (({P-1})/{2})!\right]^2 2^{P-1}}{P!}, & P\ \textrm{is}\ \textrm{odd}; \\
\myfrac{(P - 1)! \pi}{({P}/{2})! ({P}/{2} - 1)! 2^P}, & P\ \textrm{is}\ \textrm{even}.
\end{cases}
\end{equation}

In the proof of Lemma \ref{lem:MSD_instant}, ${\bf A} = \{ a_{ij}\}$ is defined as
\begin{equation}
\label{def:A}
{\bf A} \!\! \triangleq \!\!
\left[\!\!\begin{array}{ll}
1 - \mu \sigma_x^2 \Delta_L & -\theta(P) \sqrt{\frac{8}{\pi}} \frac{\kappa \alpha \Delta_0}{\omega} \\
\left( L - Q\right) \mu^2 \sigma_x^4 & 1 - 2 \mu \sigma_x^2 \Delta_0 - \theta(P) \sqrt{\frac{8}{\pi}} \frac{\kappa \alpha \Delta_0}{\omega}
\end{array}\!\!\right]
\end{equation}
and ${\bf b}_n$ is also used in the derivation (please refer to \cite{Su2}),
\begin{equation}
	{\bf b}_n \triangleq [b_{0,n}, b_{1,n}]^{\rm T},\label{def:b_n}
\end{equation}
where
\begin{align}
b_{0,n} &\triangleq
L \mu^2 \sigma_x^2 \sigma_v^2 \!+\! \left( L \!-\! Q \right) \!\left( \frac{4 \alpha^2 \kappa^2}{P} \!-\! \theta(P) \sqrt{\frac{8}{\pi}} \kappa \Delta_0 \alpha \omega\right) \nonumber\\
&\quad+ \frac{\kappa^2 \left( \Delta'_0 - 2 \Delta_0^{n+1} \right)}{\mu \sigma_x^2} G({\bf s}) - 2 \kappa \Delta_0^{n+1} G'({\bf s}),\label{def:b_0n} \\
b_{1,n} &\triangleq\left( L\! -\! Q \right) \left( \mu^2 \sigma_x^2 \sigma_v^2 \!+\! \frac{4 \alpha^2 \kappa^2}{P} - \theta(P) \sqrt{\frac{8}{\pi}} \kappa \Delta_0 \alpha \omega \right).\label{def:b_1n}
\end{align}
Please notice that in \eqref{def:A}, \eqref{def:b_0n}, and \eqref{def:b_1n}, $\omega$ is the solution of
\begin{align}
2\mu\sigma_x^2\Delta_0\Delta_L\omega^2 &+ \frac{8\kappa\alpha\theta(P)\Delta_0\Delta_Q}{\sqrt{2\pi}}\omega - 2\mu^2\sigma_x^2\sigma_v^2\Delta_0\nonumber\\
&-\kappa^2\left(\frac{4\alpha^2\Delta_Q}{P} +G({\bf s})\Delta'_0\right) = 0.\label{eq:origomegaequation}
\end{align}

In Lemma \ref{lem:MSD_instant}, the constants $\lambda_3$ and $c_3$ are
\begin{align}
\label{def:lambda_3}
\lambda_3 \triangleq & \Delta_0, \\
c_3 \triangleq & - \frac{2 \kappa \Delta_0}{\mu \sigma_x^2} \frac{\mu \sigma_x^2 - 2 \mu^2 \sigma_x^4 + \theta(P) \sqrt{\frac{8}{\pi}} \frac{\kappa \alpha \Delta_0}{\omega}}{{\rm det} \left( \lambda_3 {\bf I} - {\bf A}\right)} \nonumber \\
& \cdot \left( \kappa G({\bf s}) + \mu \sigma_x^2 G'({\bf s})\right). \label{def:c_3}
\end{align}

\subsection{Proof of Property \ref{prp:bordereffect}}
\label{append:bordereffect}
When we choose a group partition size $P$ to divide the unknown system coefficients, $L/P$ groups are obtained. For simplicity, we consider every group independently. Here we denote the number of nonzero coefficients in a group as a random variable $M$ and the probability that $M$ equals $m, 0\le m\le L,$ as ${\rm P}\left\{ M=m \right\}$. Based on the definition of transfer matrix and the fact that the Markov process is in steady-state distribution, we can get the solution of $m=0$ as
$$
{\rm P} \left\{M\!=\!0\right\} = \frac{1 - p_2}{2 - p_1 - p_2} p_1^{P - 1}.
$$
Utilizing Property \ref{prp:sparsity}, we then have
\begin{equation}
\overline{Q} = \frac{L}{P} (P \cdot {\rm P} \left\{ M > 0 \right\} ) = L \left( 1 - (1 - \overline S) p_1^{P-1}\right).
\end{equation}

\subsection{Proof of Theorem \ref{thm:MSD_B}}
\label{append:proofofthmMSDB}

\begin{proof}
In order to get the mean steady-state MSD of BS-LMS, we need to simplify the result in Lemma \ref{lem:MSD_kappa1}. Before giving out approximations, we will prove a useful inequality
\begin{equation}
G({\bf s}) < 4 \alpha^2 Q / P.
\label{eq:ineq_Gs}
\end{equation}
For small group coefficients, according to \eqref{eq:gtfunction}, we know that
\begin{equation}
\left| g_k ({\bf s}) \right| < \frac{\textstyle 2 \alpha {\left| s_{k}\right|}}{\textstyle \left\| {\bf s}_{[ \lceil k\!/\!P \rceil]}\right\|_2} .
\end{equation}
As the number of small groups is no larger than $Q/P$ and the sum of squares of $g_k ({\bf s})$, which belongs to the same small group, is less than $4 \alpha^2$, we get the inequality \eqref{eq:ineq_Gs}.

Then we will present some approximations. Utilizing $\Delta_0 \approx 1$, $\Delta'_0 \approx 2$, which are derived from Assumption \ref{assump:smallP}), and \eqref{eq:ineq_Gs} in \eqref{def:beta_0}, \eqref{def:beta_1}, \eqref{def:beta_2}, and \eqref{def:beta_3}, we have
\begin{align}
\beta_0 \approx & {4 \alpha^2 \Delta_0 \Delta_Q^2} \theta^2(P) /{\pi}\approx {4 \alpha^2 \Delta_Q^2} \theta^2(P)/{\pi}, \label{def:beta_0_approx}\\
\beta_1 \approx & \frac{1}{\pi \mu^2 \sigma_x^4 \Delta_L^2} {8 (L - Q) \alpha^2 \Delta_0 \Delta_Q}\theta^2(P) \nonumber \\
\approx & \frac{1}{\pi \mu^2 \sigma_x^4 \Delta_L^2} {8 (L - Q) \alpha^2 \Delta_Q} \theta^2(P), \label{def:beta_1_approx}\\
\sqrt{\beta_1^2 - \beta_2^2} \approx & \frac{1}{{\mu^2 \sigma_x^4 \Delta_L^2}} \bigg[ \left(  8 (L \!\! - \!\! Q) \alpha^2 \Delta_Q \theta^2(P) / \pi \! + \! 2 \Delta_L G({\bf s})\right)^2 \nonumber \\
- 16 \alpha^2 (L \! - \! & Q)^2  \theta^2(P)  \left( 4 \alpha^2 \Delta_Q^2 \theta^2(P) / \pi \!\! + \!\! 2 \mu \sigma_x^2 \Delta_L G({\bf s}) \right)/\pi \bigg]^{\frac{1}{2}} \nonumber \\
\approx & \frac{4\alpha\theta(P)\sqrt{2 (L - Q) G({\bf s}) / \pi}}{\mu^2 \sigma_x^4 \Delta_L},\label{def:beta_1_2_approx}\\
\beta_3 & \approx \frac{\pi\mu^3\sigma_x^4\sigma_v^2\Delta_L}{2\alpha^2\Delta_Q^2\theta^2(P)}.\label{def:beta_3_approx}
\end{align}

Utilizing \eqref{def:beta_1_approx}, \eqref{def:beta_1_2_approx}, and \eqref{def:beta_3_approx} in \eqref{eq:MSD_kappa}, one achieves a temporary result that
\begin{equation}\label{eq:minMSD_approx1}
D_{\infty}^{\rm{min}} \approx \frac{\mu \sigma_v^2}{\Delta_L}\left(L - \frac{2 (L - Q)}{\Delta_Q}\right) + \frac{\mu \sigma_v^2\sqrt{2 \pi(L - Q) G({\bf s}) }}{\alpha \theta(P) {\Delta_{Q}^2}}.
\end{equation}
The first item in the RHS of \eqref{eq:minMSD_approx1} could be further approximated by adopting Assumption \ref{assump:smallP}) and we finally arrive
\begin{equation}\label{eq:minMSD_approx2}
D_{\infty}^{\rm{min}} \approx \frac{\mu\sigma_v^2}{\Delta_Q}\left(Q + \frac{\sqrt{2 \pi(L - Q) G({\bf s}) }}{\alpha \theta(P) {\Delta_{Q}}}\right).
\end{equation}

For the sake of mathematical tractability, we replace $Q$ and $G({\bf s})$ in the above equation with their means, respectively, to yield the final average steady-state MSD of \eqref{eq:MSD_B}. One may accept that there is no other choice because the formula of \eqref{eq:minMSD_approx2} is highly sophisticated. However, simulation result will verify that the approximation produces acceptable errors.

What remains in finishing the proof is to derive the expression of $\overline{Q}$ and $\overline{G({\bf s})}$. The former could be gotten based on Property \ref{prp:bordereffect} of M-G model, and we define
\begin{equation}\label{def:mean_Delta_Q}
\Delta_{\overline{Q}} \triangleq 2 - (\overline{Q} + 2) \mu \sigma_x^2.
\end{equation}
In the following, we will conduct the derivations of $\overline{G({\bf s})}$.

Assuming that there are $m$ nonzero unknown coefficients in a certain small group, we have
\begin{equation}\label{eq:niid}
\sum_{k=1}^m g_k^2({\bf s}) = 4 \alpha^4 \sum_{k=1}^m s_k^2 - 8 \alpha^3 \left(\sum_{k=1}^m s_k^2\right)^{\frac12} + 4 \alpha^2.
\end{equation}
We denote the mean of the LHS of (\ref{eq:niid}) as $F_{\alpha}(m)$. According to the M-G model that the nonzero coefficients follow \emph{i.i.d.} Gaussian distribution and the property of $\chi^2$ distribution, $F_{\alpha}(m)$ is obtained as
\begin{align}
F_{\alpha}(m) = & \frac{4\alpha^2}{\Gamma(m/2)} \bigg[2 \alpha^2 \sigma_s^2 \gamma \left(\frac{m+2}{2},\frac{1}{2 \alpha^2}\right) \! + \! \gamma \left(\frac{m}{2},\frac{1}{2 \alpha^2}\right) \nonumber \\ & - 2 \sqrt{2} \alpha \sigma_s \gamma \left(\frac{m+1}{2},\frac{1}{2 \alpha^2}\right)\bigg], \label{def:Falphan}
\end{align}
where $\Gamma(\cdot)$ and $\gamma(\cdot,\cdot)$ denote the ordinary gamma function and the lower incomplete function, respectively.
Then we know that
\begin{equation}\label{def:Gamma1_mean}
\overline{G({\bf s})} = \frac{L}{P} \sum_{m=1}^P {\rm P}\left\{ M=m\right\}F_{\alpha}(m).
\end{equation}
where
${\rm P}\{M = m\}$ follows the definition in Appendix \ref{append:bordereffect} and is further solved as
\begin{align}
&{\rm P} \left\{M\!=\!m\right\} \nonumber\\
&\!\!=\!\!\begin{cases}
\myfrac{1 - p_2}{2 - p_1 - p_2} p_1^{P - 1}, & m \!\! = \!\!0; \\
\left(1 - \myfrac{(1 - p_2) p_1^{P-1} + (1 - p_1) p_2^{P-1}}{2 - p_1 - p_2} \right)  \\
\cdot \myfrac{1 - \myfrac{p_2}{p_1}}{1 - (\myfrac{p_2}{p_1})^{P - 1}} (\myfrac{p_2}{p_1})^{m - 1},
 & 0 \!\! < \!\! m \!\! < \!\! P;\\
\myfrac{1 - p_1}{2 - p_1 - p_2} p_2^{P - 1}, & m \!\! = \!\!P.
\end{cases} \label{def:Pxk}
\end{align}
Then the AMS-MSD is finally achieved.

Considering its sophisticated shape of expression \eqref{eq:MSD_B}, we prefer to numerically solve the optimal parameter by \eqref{eq:Popt}. Thus the proof of Theorem \ref{thm:MSD_B} is completed.
\end{proof}

\subsection{Proof of Theorem \ref{thm:meanMSD_instant}}
\label{append:proofoflemmaMSDinstant}

\begin{proof}
From Lemma \ref{lem:MSD_instant}, we have already gotten the close forms of instantaneous MSD, $c_3$, and $\lambda_3$. Here we will show how to get the approximate close forms of $\lambda_1, \lambda_2, c_1$, and $c_2$. Note that all expressions of $\{c_i, \lambda_i\}$ are attached by $(\cdot)^\prime$ here to distinguish them from those in Lemma \ref{lem:MSD_instant}.

We will first present the sketch of our proof. Based on Lemma \ref{lem:MSD_instant}, we know that $\lambda_1$ and $\lambda_2$ are the eigenvalues of matrix ${\bf A}$, which is defined by \eqref{def:A}. Utilizing Assumption \ref{assump:smallP}), which derives $\Delta_0\approx1$ and $\Delta_Q\approx2$, in \eqref{def:A}, we could simplify $\det(\lambda{\bf I}-{\bf A})=0$ by
\begin{align}
(\lambda')^2 &- \left( 2 - 2\mu\sigma_x^2 - \mu C(\mu,\omega) \right) \lambda' \nonumber\\
&+ 1 - 2\mu\sigma_x^2 - \mu C(\mu,\omega) +2\mu^2\sigma_x^2 C(\mu,\omega)  = 0,\label{eq:quadraticequation}
\end{align}
where
\begin{equation}\label{eq:defC}
	C(\mu,\omega) \triangleq \sigma_x^2\Delta_L + \sqrt{\frac{8}{\pi}}  \frac{\kappa \alpha}{\mu\omega }\theta(P).
\end{equation}
We then solve the quadratic equation of \eqref{eq:quadraticequation} and get the close forms of
$$
\lambda'_1 \triangleq  1 - 2\mu \sigma_x^2, \qquad
\lambda'_2 \triangleq  1 - \mu C(\mu,\omega).
$$
As a consequence, $c'_1$ and $c'_2$ are obtained by satisfying initial values,
\begin{align}
c'_{1,2} \triangleq & \frac{ (1 - \mu \sigma_x^2 \Delta_L) \left\| {\bf s}\right\|_2^2 + b_{0,0} - D_{\infty} - c'_3 \lambda'_3}{\lambda'_{1,2} - \lambda'_{2,1}} \nonumber \\
& - \frac{\lambda'_{2,1} \left({\left\| {\bf s}\right\|_2^2} - c'_3 - D_{\infty}\right)}{\lambda'_{1,2} - \lambda'_{2,1}},\label{def:c_12}
\end{align}
where $\lambda'_3$ and $c'_3$ are defined in \eqref{def:lambda_3} and \eqref{def:c_3}, respectively.
If we could prove that
\begin{equation}\label{eq:defapproxC}
C(\mu,\omega) \approx 2 \sigma_x^2\alpha \theta(P) \sqrt{\frac{2 \left( L - Q\right) }{\pi G({\bf s})}},
\end{equation}
Thoerem \ref{thm:meanMSD_instant} is ready to be proved.

Next we will prove \eqref{eq:defapproxC}. By taking the approximation of $\Delta_0 \approx 1, \Delta'_0 \approx \Delta_Q \approx 2$ and  utilizing the optimal $\kappa$ in \eqref{eq:origomegaequation}, we get
\begin{equation}
\mu\sigma_x^2\Delta_L\omega^2 + \frac{8\kappa_{\rm opt}\alpha\theta(P)\omega}{\sqrt{2\pi}} - \mu^2\sigma_x^2\sigma_v^2
-\kappa_{\rm opt}^2\!\left(\!\frac{4\alpha^2}{P} +G({\bf s})\!\right)\! = \!0.\label{eq:omegaequation}
\end{equation}
Solving \eqref{eq:omegaequation} to get the closed form of $\omega$ and inserting it into $\kappa_{\rm opt} / ( \mu\omega)$, we get
\begin{equation}
\frac{\kappa_{\rm opt}}{\mu\omega } = \frac{\sigma_x^2 \Delta_L}{- \sqrt{\frac{8}{\pi}} \alpha \theta(P)  + \sqrt{t_1  +  t_2 + t_3}},\label{eq:kappaomegamu}
\end{equation}
where
\begin{align}
t_1 &\triangleq 8 \alpha^2 \theta^2(P) / \pi,\label{def:t_1}\\
t_2 &\triangleq \mu^3 \sigma_x^4 \sigma_v^2 \Delta_L / \kappa_{\rm opt}^2
\approx 16\alpha^2\theta^2(P)/(\pi t_4),\label{def:t_2}\\
t_3 &\triangleq \mu\sigma_x^2\Delta_L \left( 4\alpha^2/P + G(s) \right),\\
t_4 &\triangleq \sqrt{\frac{L-Q}{2\pi G({\bf s})}}\frac{4\alpha\theta(P)}{\Delta_L}-1.\label{def:t_4}
\end{align}
Please note that the approximation in \eqref{def:t_2} is produced by utilizing \eqref{def:beta_0_approx}, \eqref{def:beta_1_approx}, \eqref{def:beta_1_2_approx}, \eqref{def:beta_3_approx}, and $\Delta_Q \approx 2$ in \eqref{eq:kappa_optimal}.

If it can be proved that
\begin{equation}
G(s) \ll L \alpha^2 / P,\label{eq:rangeofGs}
\end{equation}
we see that $t_1$ is much larger than $t_3$.
Inserting \eqref{def:t_1} and \eqref{def:t_2} in \eqref{eq:kappaomegamu} and omitting $t_3$, we get
\begin{equation}
\frac{\kappa_{\rm opt}}{\mu\omega } = \frac{\sigma_x^2\Delta_L}{\sqrt{\frac{8}{\pi}}\alpha\theta(P)\left(-1+\sqrt{1+2/t_4}\right)} \approx\sqrt{\frac{\pi}8}\frac{\sigma_x^2\Delta_L t_4}{\alpha\theta(P)},\label{eq:newkappaomegamu}
\end{equation}
where $t_4\gg 1$, which can be derived based on \eqref{def:t_4} and \eqref{eq:rangeofGs}, is adopted in \eqref{eq:newkappaomegamu}.
Utilizing \eqref{def:t_4} and \eqref{eq:newkappaomegamu} in \eqref{eq:defC}, we finally prove \eqref{eq:defapproxC}.

Now we will prove \eqref{eq:rangeofGs}. According to the empirical hypothesis that ${\rm max} (F_{\alpha} (m) / \alpha^2) \sim O(1)$ and two properties that ${\rm P} \left\{ M = m\right\} \ll 1$ with $1 \leq M \leq P$ and that $F_{\alpha}(m)$ decreases dramatically with the increase of $m$ based on the property of $\gamma(\cdot, \cdot)$, we get that $\overline{G(s)} \ll L \alpha^2 / P$. Finally we adopt the method used in the proof of Theorem \ref{thm:MSD_B} by replacing the random variables in Lemma \ref{lem:MSD_instant} by their expectations to produce an average result.

\end{proof}

In Theorem \ref{thm:meanMSD_instant}, $\overline {\left\| {\bf s} \right\|_2^2}$ and $\overline{G'({\bf s})}$ are defined as, respectively,
\begin{align}
\label{def:norms_mean}
\overline {\left\| {\bf s} \right\|_2^2} = & L\overline{S}\sigma_s^2,\\
\label{def:Gamma2_mean}
\overline{G'({\bf s})} = & \frac{L}{P} \sum_{m=1}^{P} {\rm P}\left\{ M = m\right\} F'_{\alpha} \left( m\right),
\end{align}
where $F'_{\alpha}(m)$ is defined as
$$
\frac{1}{\Gamma (m/2)}\!\! \left[ 4 \alpha^2 \sigma_s^2 \gamma \left( \!\frac{m\!+\!2}{2}, \frac{1}{2 \alpha^2}\!\right)
- 2 \sqrt{2} \alpha \sigma_s \gamma \left(\! \frac{m\!+\!1}{2}, \frac{1}{2 \alpha^2}\!\right) \right].
$$

\subsection{Proof of Corollary \ref{cor:MSD_smalllambda}}
\label{append:proofofcorollaryMSD_smalllambda}

\begin{proof}
Because the step-size in BS-LMS with $P_{\rm opt}$ is larger than that in $l_0$-LMS, it is obvious that $\lambda'_1$ and $\lambda'_3$ in BS-LMS with $P_{\rm opt}$ is smaller. In order to compare $\lambda'_2$ between BS-LMS and $l_0$-LMS, we investigate $(L - \overline{Q}) \theta^2(P) / \overline{G({\bf s})}$ in \eqref{eq:lambda_2}.

According to the property of $\gamma(a, x)$, $F'_{\alpha}(n)$ decreases dramatically with the increase of $n$. Thus $\sum_{m=1}^P {\rm P} \left\{ M=m\right\} F'_{\alpha}(m)$ is mainly determined by its first item. We then have
$$
\frac{(L - \overline{Q}) \theta^2(P)}{\overline{G({\bf s})}} \approx \frac{(1 - p_2) \theta^2(P) P f(P)}{(1 - p_2/p_1) F_{\alpha}(1)},
$$
where $f(P)$ is defined as
$$
f(P) \!\! \triangleq \!\!
\begin{cases}
\frac{\textstyle 1 - p_2/p_1}{\textstyle 1 - p_1}, &  P = 1; \\
\frac{\textstyle p_1^{P - 1} - p_2^{P - 1}}{\textstyle(1 - p_2)(1 - p_1^{P - 1}) + (1 - p_1)(1 - p_2^{P - 1})}, & P \geq 2.
\end{cases}
$$

First, we show that $\theta^2(P) P$ is no less than $\theta^2(1) \cdot 1$. When $P$ is odd, $\theta^2(P) P$ can be expressed as
$$
\theta^2(P) P =
\begin{cases}
\myfrac{(P-1)!!(P-1)!!}{P!!(P-2)!!}, &  P \ \textrm{is} \ \textrm{odd} \ \textrm{and} \ P > 1; \\
1, & P=1.
\end{cases}
$$
We see that $\theta^2(P) P$ increases when $P$ becomes larger. Similarly, the same conclusion is reached when $P$ is even. Moreover, we can see that $\theta^2(2) \cdot 2 = \pi^2/8 > 1 = \theta^2(1) \cdot 1$. Therefore  $\theta^2(P) P \geq \theta^2(1) \cdot 1$ is gotten.

Next, we prove $f(P_{\rm opt}) \geq f(1)$ by utilizing the condition that $\overline{Q(P_{\rm opt})}/L = 1 - (1 - p_2)p_1^{P_{\rm opt} - 1}/(2 - p_1 - p_2) \ll 1$. $f(P_{\rm opt}) \geq f(1)$ is equivalent to
\begin{align}
L(P_{\rm opt}) & = \frac{1 - p_1}{1 - p_2} \left( 1 - (p_2/p_1)^{P_{\rm opt}}\right) \nonumber \\
& \geq R(P_{\rm opt}) = \frac{\overline{Q(P_{\rm opt})}/L}{1 - \overline{Q(P_{\rm opt})}/L}(1 - p_2/p_1).
\label{eq:LRP}
\end{align}
If $2 \leq P_{\rm opt} \leq p_1/(p_1 - p_2)$, we just need to prove $L(2) \geq R(2)$ and $L(p_1/(p_1 - p_2)) \geq R(p_1/(p_1 - p_2))$ because $L(P)$ and $R(P)$ are concave and convex respectively. When $P = 2$, we have
$$
L(2) \!\! = \!\! \frac{1 - p_1}{1 - p_2}(1 - \frac{p_2}{p_1})(1 + \frac{p_2}{p_1}) \!\! > \!\! R(2) \!\! = \!\! \frac{1 - p_1}{1 - p_2}(1 - \frac{p_2}{p_1})\frac{2 - p_2}{p_1},
$$
based on that $p_1$ and $p_2$ are close to $1$. When $P = p_1/(p_1 - p_2)$, we take the first-order Taylor expansion of $L(P)$ and $R(P)$. If the Taylor expansion of $L(P)$ is far larger than that of $R(P)$, \eqref{eq:LRP} is valid. we know that
\begin{align}
&L(\frac{p_1}{p_1 \!\! - \!\! p_2}) \!\! \approx \!\! \frac{1 \!\! - \!\! p_1}{1 \!\! - \!\! p_2}(1 \!\! - \!\! \frac{p_2}{p_1}) (1 + \frac{p_2}{p_1 - p_2}) \!\! \nonumber \\
& \gg \!\! R(\frac{p_1}{p_1 \!\! - \!\! p_2}) \!\! \approx \!\! \frac{1 \!\! - \!\! p_1}{1 \!\! - \!\! p_2} (1 \!\! - \!\! \frac{p_2}{p_1}) \left(1 \!\! + \!\! \frac{(2 \!\! - \!\! p_1 \!\! - \!\! p_2) \frac{p_2}{p_1 - p_2}}{1 - \frac{1 - p_1}{p_1 - p_2}p_2}\right),
\end{align}
based on that $p_2/(p_1 - p_2) \gg 1$, $2 - p_1 - p_2 \ll 1$ and $(1 - p_1)/(p_1 - p_2) \ll 1$. Therefore \eqref{eq:LRP} is satisfied when $2 \leq P_{\rm opt} \leq p_1/(p_1 - p_2)$. If $P_{\rm opt} > p_1/(p_1 - p_2)$, we have that
\begin{align}
L(P_{\rm opt}) > & \frac{1 - p_1}{1 - p_2} \left( 1 - (p_2/p_1)^{p_1/(p_1 - p_2)}\right) \nonumber \\
> & \frac{1 - p_1}{1 - p_2} (1 - 1/{\rm e}) \geq \frac{1}{3} (1 - p_2) > \frac{1}{3} (1 - p_2/p_1) \nonumber \\
> & \frac{\overline{Q(P_{\rm opt})}/L}{1 - \overline{Q(P_{\rm opt})}/L}(1 - p_2/p_1) = R(P_{\rm opt}). \nonumber
\end{align}
Thus the conclusion that $f(P_{\rm opt}) \geq f(1)$ is reached.

Above all, it can be seen that $(L - \overline{Q}) \theta^2(P) / \overline{G({\bf s})}$ is larger when $P$ is $P_{\rm{opt}}$. Therefore, $\lambda'_2$ in BS-LMS with optimal $P$ is smaller than that in $l_0$-LMS. Thus the proof of Corollary \ref{cor:MSD_smalllambda} is arrived.

\end{proof}

\footnotesize

\begin{figure}[t]
\begin{center}
\includegraphics[width=\figurewidth]{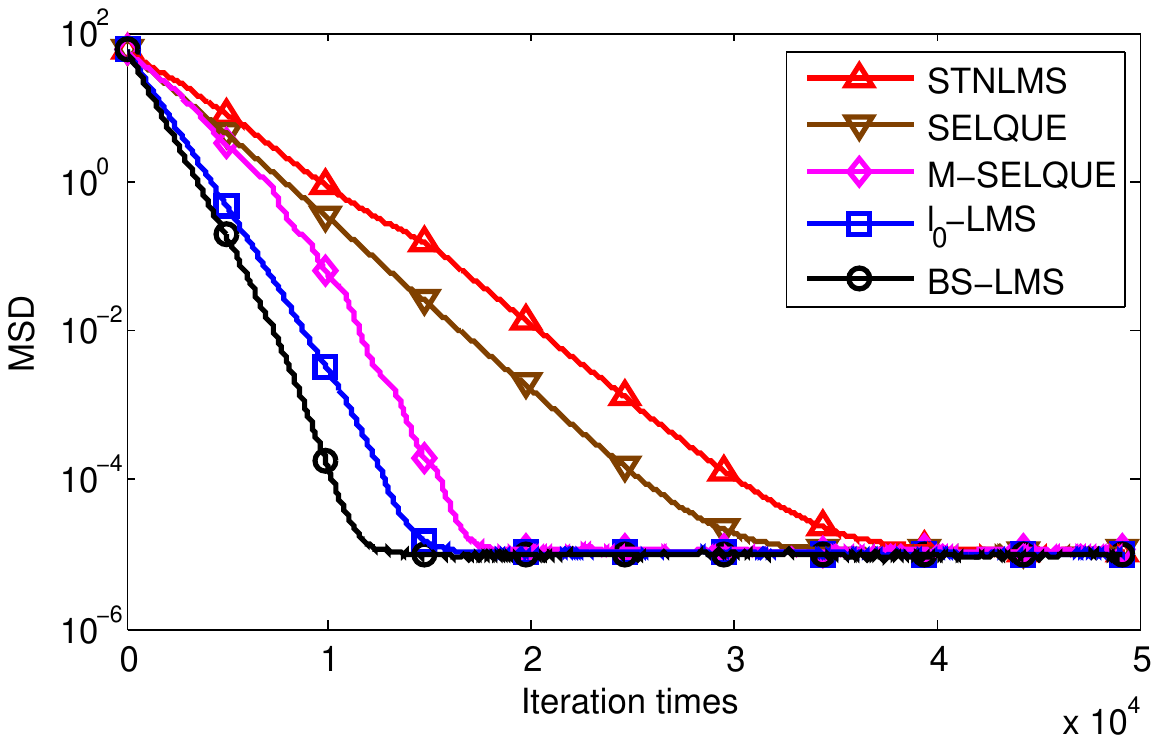}
\includegraphics[width=\figurewidth]{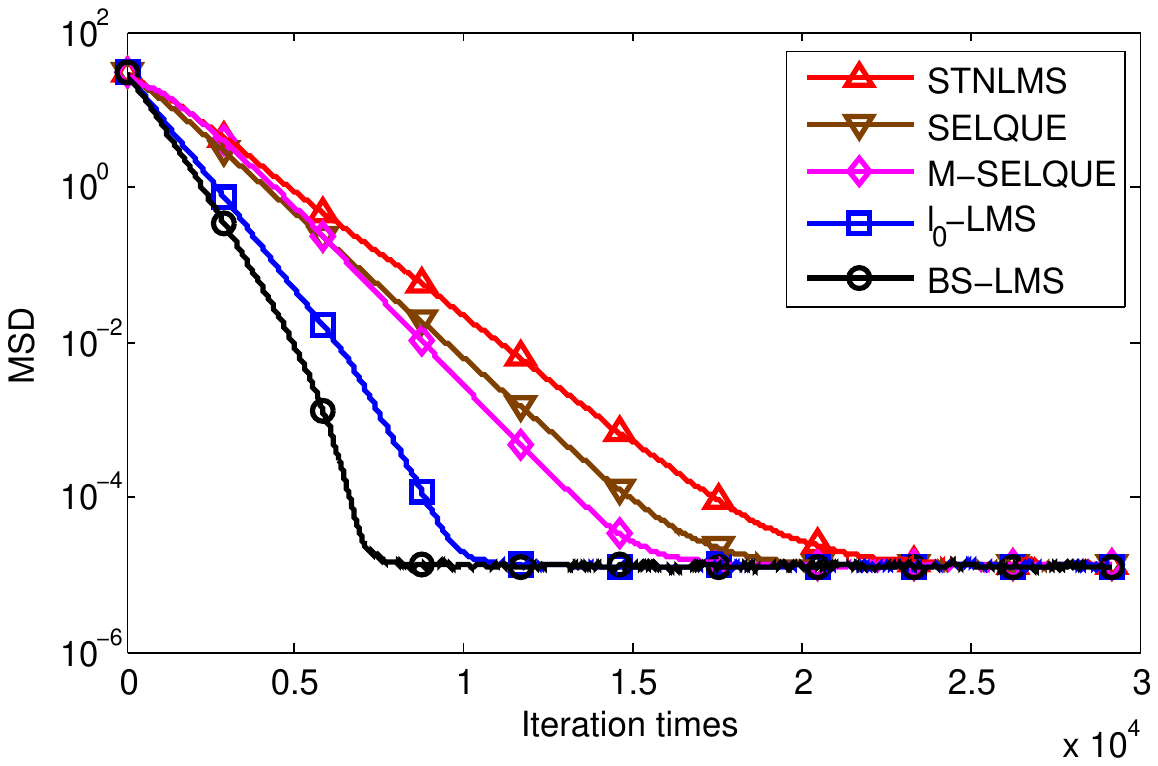}
\includegraphics[width=\figurewidth]{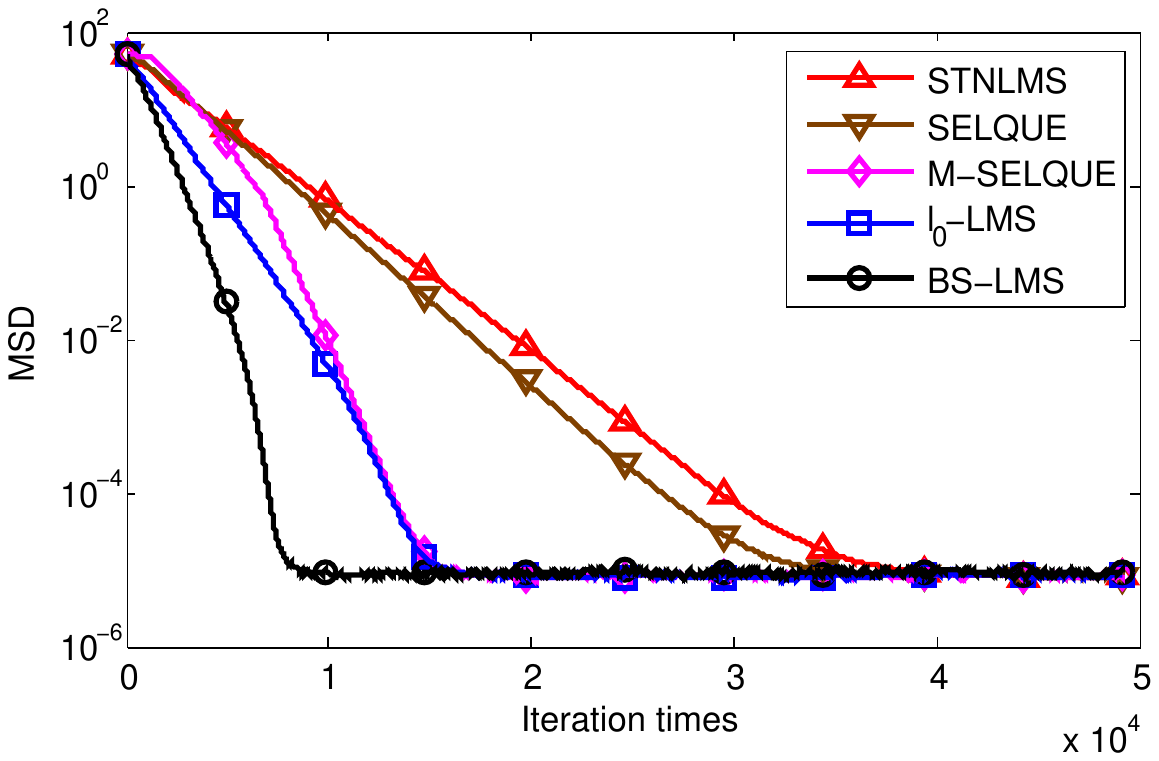}
\caption{The learning curves of the proposed algorithm and the references when identifying three different unknown systems whose impulse response are plotted in the diagonal subfigures of Fig. \ref{fig:coeff_markov}, where (top), (middle), and (bottom) corresponding to, respectively, the left-top, the middle, and the right-bottom. }
\label{fig:MSD_instant_block}
\end{center}
\end{figure}


\begin{thebibliography}{1}

\bibitem{Haykin}
S.~Haykin, \emph{Adaptive Filter Theory}. Englewood~Cliffs, NJ: Prentice-Hall, 1986.
\bibitem{Schreiber}
W.~F.~Schreiber, ``Advanced television systems for terrestrial broadcasting: Some problems and some proposed solutions,'' \emph{Proc. IEEE}, vol.~83, no.~6, pp.~958-981, Jun. 1995.
\bibitem{Duttweiler}
D.~L.~Duttweiler, ``Proportionate normalized least-mean-squares adaptation in echo cancellers,'' \emph{IEEE Trans. Speech Audio Process.}, vol.~8, no.~5, pp.~508-518, Sep. 2000.
\bibitem{ITU}
``Transmission systems and media, digital systems and networks'', recommendation ITU-T G.168 (2002).
\bibitem{Sugiyama}
A.~Sugiyama, K.~Anzai, H.~Sato, and A.~Hirano, ``Cancellation of multiple echoes by multiple autonomic and distributed echo canceler units,'' \emph{IEICE Trans. Fund.}, vol.~E81-A, no.~11, pp.~2361-2369, Nov. 1998.
\bibitem{Widrow}
B.~Widrow and S.~D.~Stearns, \emph{Adaptive Signal Processing}. Englewood Cliffs, NJ: Prentice-Hall, 1985.
\bibitem{Margo}
J.~H.~Gross, D.~M.~Etter, V.~A.~Margo, and N.~C.~Carlson, ``A block selection adaptive delay filter algorithm for echo cancellation,'' in \emph{Midwest Conf. Circuits Syst.}, Aug. 1992, pp.~895-898.
\bibitem{Margo1}
V.~A.~Margo, D.~M.~Etter, N.~C.~Carlson, and J.~H.~Gross, ``Multiple short-length adaptive filters for time-varying echo cancellation,'' \emph{IEEE ICASSP}, 1993, pp.~I161-I164.
\bibitem{Berg}
M.~Berggren, M.~Borgh, C.~Schuldt, F.~Lindstrom and I.~Claesson, ``Low-Complexity Network echo cancellation approach for systems equipped with external memory,'' \emph{IEEE Trans. Audio, Speech and Language Process.}, vol.~19, no.~8, pp.~2506-2515, Nov. 2011.
\bibitem{Gu}
Y.~Gu, Y.~Chen, and K.~Tang, ``Network echo canceller with active taps stochastic localization,'' \emph{IEEE ISCIT}, pp.~556-559, 2005.
\bibitem{Li}
Y.~Li, Y.~Gu, and K.~Tang, ``Parallel NLMS filters with stochastic active taps and step-sizes for sparse system identification,'' \emph{IEEE ICASSP}, vol.~3, pp.~109-112, Toulouse, France, 2006.
\bibitem{Liu}
X.~Liu, Y.~Li, Y.~Gu, and K.~Tang, ``Enhanced stochastic taps NLMS filter with efficient sparse taps localization,'' \emph{IEEE ICSP}, vol.~4, pp.~16-20, 2006.
\bibitem{Sugiyama2}
A.~Sugiyama, H.~Sato, A.~Hirano, and S.~Ikeda, ``A fast convergence algorithm for adaptive FIR filters under computational constraint for adaptive tap-position control,'' \emph{IEEE Trans. Circuits Syst. II}, vol.~43, pp.~629-636, Sept. 1996.
\bibitem{Sugiyama3}
A.~Sugiyama, S.~Ikeda, and A.~Hirano, ``A fast convergence algorithm for sparse-tap adaptive FIR filters identifying an unknown number of dispersive regions,'' \emph{IEEE Trans. Signal Process.}, vol.~50, no.~12, pp.~3008-3017, December 2002.
\bibitem{Nosko}
O.~A.~Noskoski, J.~C.~M.~Bermudez, S.~J.~M.~Almeida, ``Region-based wavelet-packet adaptive algorithm for identification of sparse impulse responses,'' \emph{IEEE Trans. Signal Process}, vol.~61, no.~13, pp.~3321-3333, July, 2013.
\bibitem{Gu2}
Y.~Gu, J.~Jin, and S.~Mei, ``$l_0$ norm constraint LMS algorithm for sparse system identification,'' \emph{IEEE Signal Process. Lett.}, vol.~16, no.~9, pp.~774-777, Sep. 2009.
\bibitem{Chen}
Y.~Chen, Y.~Gu, and A.~O.~Hero, ``Sparse LMS for system identification,'' \emph{IEEE ICASSP}, pp.~3125-3128, Taiwan, Apr. 2009.
\bibitem{Taheri}
O.~Taheri and S.~A.~Vorobyov, ``Reweighted $l_1$-norm penalized LMS for sparse channel estimation and its analysis,'' Submitted to \emph{IEEE Trans. Signal Process.}.
\bibitem{Mohimani}
H.~Mohimani, M.~Babaie-Zadeh, and C.~Jutten, ``A fast approach for overcomplete sparse decomposition based on smoothed L0 norm,'' \emph{IEEE Trans. Signal Process.}, vol.~57, no.~1, pp.~289-301, Jan. 2009.
\bibitem{Mohimani2}
H.~Mohimani, M.~Babaie-Zadeh, I.~Gorodnitsky, and C.~Jutten, ``Sparse recovery using smoothed L0 (SL0): convergence analysis,'' \emph{ArXiv preprint arXiv:1001.5073}, 2010.
\bibitem{Wu}
F.~Wu and F.~Tong, ``Gradient optimization p-norm-like constraint LMS algorithm for sparse system estimation,'' \emph{Signal Process.}, 93, 967-971, 2013.
\bibitem{Wu2}
F.~Wu, Y.~Zhou, F.~Tong and R.~Kastner, ``Simplified p-norm-like constraint LMS algorithm for efficient estimation of underwater acoustic channels,'' \emph{Journal of Marine Science and Application}, Volume~12, Issue~2, pp.~228-234, June 2013.
\bibitem{Chen2}
Y.~Chen, Y.~Gu, and A.~O.~Hero, ``Regularized least-mean-square algorithms,'' \emph{ArXiv e-prints Dec. 2010} [Online]. Available: http://arxiv.org/abs/1012.5066v2.
\bibitem{Su2}
G.~Su, J.~Jin, Y.~Gu, and J.~Wang, ``Performance analysis of $l_0$ norm constraint least mean square algorithm,'' \emph{IEEE Trans. Signal Process.}, vol.~60, no.~5, pp.~2223-2235, May 2012.
\bibitem{Eldar}
Y.~C.~Eldar and M.~Mishali, ``Robust recovery of signals from a structured union of subspaces,'' \emph{IEEE Trans. Inf. Theory}, vol.~55, no.~11, pp.~5302-5316, Nov. 2009.
\bibitem{Stojnic}
M.~Stojnic, F.~Parvaresh, and B.~Hassibi, ``On the Reconstruction of Block-Sparse Signals With an Optimal Number of Measurements,'' \emph{IEEE Trans. Signal Process.}, vol.~57, no.~8, pp.~3075-3085, 2009.
\bibitem{Stojnic2}
M.~Stojnic, ``$l_2/l_1$-Optimization in Block-Sparse Compressed Sensing and Its Strong Thresholds,'' \emph{IEEE Journal of Selected Topics in Signal Processing}, vol.~4, no.~2, pp.~350-357, Apr. 2010.
\bibitem{Liu2}
J.~Liu, J.~Jin, and Y.~Gu, ``Efficient Recovery of Block Sparse Signals via Zero-point Attracting Projection,'' \emph{IEEE ICASSP}, pp.~3333-3336, Mar.~25-30, 2012, Kyoto, Japan.
\bibitem{Elhamifar}
E.~Elhamifar and R.~Vidal, ``Block-Sparse Recovery via Convex Optimization,'' \emph{IEEE Trans. Signal Process.}, vol.~60, no.~8, pp.~4094-4107, Aug. 2012.
\bibitem{Baraniuk}
R.~G.~Baraniuk, V.~Cevher, M.~F.~Duarte, and C.~Hegde, ``Model-based compressive sensing,'' \emph{IEEE Trans. Inf. Theory}, vol.~56, no.~4, pp.~1982-2001, 2010.
\bibitem{Eldar2}
Y.~C.~Eldar, P.~Kuppinger, and H.~Bolcskei, ``Block-sparse signals: uncertainty relations and efficient recovery,'' \emph{IEEE Trans. Signal Process.}, vol.~58, no.~6, pp.~3042-3054, 2010.
\bibitem{Cevher}
V.~Cevher, M.~F.~Duarte, C.~Hegde, and R.~G.~Baraniuk, ''Sparse signal recovery using Markov random fields,'' \emph{NIPS}, Vancouver, BC, Canada, Dec. 2008.
\bibitem{Ben-Haim}
Z.~Ben-Haim and Y.~C.~Eldar, ``Near-Oracle Performance of Greedy Block-Sparse Estimation Techniques From Noisy Measurements,'' \emph{IEEE Trans. Signal Process.}, vol.~5, no.~5, pp.~1032-1047, 2011.
\bibitem{Yu}
L.~Yu, H.~Sun, J.~P.~Barbot, and G.~Zheng, ``Bayesian compressive sensing for cluster structured sparse signals,'' \emph{Signal Process.}, vol.~92, no.~1, pp.~259-269, 2012.
\bibitem{Zhang}
Z.~Zhang and B.~D.~Rao, ``Extension of SBL Algorithms for the Recovery of Block Sparse Signals With Intra-Block Correlation,'' \emph{IEEE Trans. Signal Process.}, vol.~61, no.~8, pp.~2009-2015, 2013.
\bibitem{Cevher2}
V.~Cevher, P.~Indyk, C.~Hegde, and R.~G.~Baraniuk, ``Recovery of clustered sparse signals from compressive measurements,'' \emph{SAMPTA}, Marseille, France, May. 2009.
\bibitem{Parvaresh}
F.~Parvaresh and B.~Hassibi, ``Explicit measurements with almost optimal thresholds for compressed sensing,'' \emph{IEEE ICASSP}, Mar-Apr 2008.
\bibitem{Eksioglu}
E.~M.~Eksioglu, ``Group sparse RLS algorithms'', \emph{International Journal of Adaptive Control and Signal Processing}, Dec.~11, 2013.
\bibitem{Bradley}
P.~S.~Bradley and O.~L.~Mangasarian, ``Feature selection via concave minimization and support vector machines,'' \emph{ICML}, 1998, pp.~82-90.
\bibitem{McCoy}
B.~M.~McCoy and T.~T.~Wu. \emph{The two-dimensional Ising model.} Harvard Univ. Press, 1973.

\end{thebibliography}
\end{document}